
\documentclass[journal,twocolumn,10pt]{IEEEtran} 

\usepackage[section]{placeins}
\usepackage{graphics} 
\usepackage{epsfig} 
\usepackage{mathptmx} 
\usepackage{amsthm}
\usepackage{amsmath} 
\usepackage{amssymb}  
\usepackage{cuted}
\usepackage{float}
\usepackage{color}
\usepackage{url}
\usepackage{bm}
\usepackage{dblfloatfix}

\usepackage[ruled,vlined,linesnumbered,noresetcount]{algorithm2e} 
\SetKwComment{Comment}{$\triangleright$\ }{}
\usepackage{etoolbox}


\def\Tau{{\mathbf{T}}}

\makeatletter
\patchcmd{\@algocf@start}
  {-1.5em}
  {0pt}
  {}{}
\makeatother

\usepackage{tabularx}

\usepackage{caption,subcaption,cite}
\usepackage[font=small,labelfont=bf]{caption}

\newtheorem{corollary}{Corollary}
\newtheorem{lemma}{Lemma}
\newtheorem{proposition}{Proposition}
\newtheorem{definition}{Definition}
\newtheorem{remark}{Remark}

\newtheorem{condition}{Condition}

%

\makeatletter
\def\ps@IEEEtitlepagestyle{%
  \def\@oddfoot{\mycopyrightnotice}%
  \def\@evenfoot{}%
}
\def\mycopyrightnotice{%
  \begin{minipage}{\textwidth}
  \centering \scriptsize
  \copyright~2022. This manuscript version is made available under the CC-BY-NC-ND 4.0 license \url{https://creativecommons.org/licenses/by-nc-nd/4.0/}
  \end{minipage}
}
\makeatother

\begin{document}
%
\title{ 
RADAMS: Resilient and Adaptive Alert and Attention Management Strategy against Informational Denial-of-Service (IDoS) Attacks
}
%
%
%

\author{Linan~Huang  and~Quanyan~Zhu
\thanks{L. Huang and Q. Zhu are with the Department of Electrical and Computer Engineering, New York University, Brooklyn, NY, 11201, USA. 
E-mail:\{lh2328,qz494\}@nyu.edu}
\thanks{This work was supported in part by the National Science Foundation (NSF) under Grants ECCS-1847056, CNS-2027884, and BCS-2122060; and in part by Army Research Office (ARO) under Grant W911NF-19-1-0041 and DOE-NE under Grant 20-19829.}
\thanks{
Digital Object Identifier: 10.1016/j.cose.2022.102844. 
A preliminary version of this work \cite{huang2021IDOS} was presented at the $12$-th Conference on Decision and Game Theory for Security 
}
}

\maketitle
\begin{abstract}
Attacks exploiting human attentional vulnerability have posed severe threats to cybersecurity.  
In this work, we identify and formally define a new type of proactive attentional attacks called Informational Denial-of-Service (IDoS) attacks that generate a large volume of feint attacks to overload human operators and hide real attacks among feints. 
We incorporate human factors (e.g., levels of expertise, stress, and efficiency) and empirical \textcolor{black}{psychological} results (e.g., the Yerkes–Dodson law and the sunk cost fallacy) to model the operators' attention dynamics and their decision-making processes along with the real-time alert monitoring and inspection. 
To assist human operators in dismissing the feints and escalating the real attacks timely and accurately, we develop a Resilient and Adaptive Data-driven alert and Attention Management Strategy (RADAMS) that de-emphasizes alerts selectively based on the \textcolor{black}{abstracted category labels of the} alerts. 
RADAMS uses reinforcement learning to achieve a customized and transferable design for various human operators and evolving IDoS attacks. 
The integrated modeling and theoretical analysis lead to the Product Principle of Attention (PPoA), fundamental limits, and the tradeoff among crucial human and economic factors. 
Experimental results corroborate that the proposed strategy outperforms the default strategy and can reduce the IDoS risk by as much as $20\%$. 
Besides, the strategy is resilient to large variations of costs, attack frequencies, and human attention capacities. 
We have recognized interesting phenomena such as attentional risk equivalency, attacker's dilemma, and the half-truth optimal attack strategy. 
\end{abstract}
%
\begin{IEEEkeywords}
Human attention vulnerability, feint attacks, reinforcement learning, risk analysis, cognitive load, alert fatigue. 
\end{IEEEkeywords}

%
\IEEEpeerreviewmaketitle

%
%
%
%


\section{Introduction}
\IEEEPARstart{H}{uman} vulnerability and human-induced security threats have been a long-standing and fast-growing problem for the security of Industrial Control Systems (ICSs). 
According to Verizon \cite{Verizon2021}, $85\%$ data \textcolor{black}{breaches involve} human errors. 
Attentional vulnerability is one of the representative human vulnerabilities. 
Adversaries have exploited human inattention to launch social engineering attacks and phishing attacks toward employees and users. 
According to the report \cite{Tessian}, $29\%$ of employees fall for a phishing scam, and $36\%$ send a misdirected email, owing to lack of attention. 
These attentional attacks are \textit{reactive} as they exploit the existing human attention patterns. 
On the contrary, \textit{proactive} attentional attacks can strategically change the attention pattern of a human operator or a network administrator. 
For example, an attacker can launch feint attacks to trigger a large volume of alerts and overload the human operators so that operators fail to inspect the alert associated with real attacks \cite{WinNT1}.   
We refer to this new type of attacks as the Informational Denial-of-Service (IDoS) attacks, which aim to deplete the limited attention resources of human operators to prevent them from accurate detection and timely defense. 

IDoS attacks bring significant security challenges to ICSs for the following reasons. 
First, alert fatigue has already been a serious problem in the age of infobesity with terabytes of unprocessed data or manipulated information. 
\textcolor{black}{According to the Ponemon Institute research report \cite{PonemonLLC}, organizations spend nearly $21,000$ hours each year analyzing false alarms, which costs organizations an average of $\$1.27$ million per year.} 
IDoS attacks exacerbate the problem by generating feints to intentionally increase the percentage of false-positive alerts. 
Second, IDoS attacks directly target the human operators and security analysts in the Security Operations Center (SOC) \textcolor{black}{that acts as the `central immune system' in ICSs.}  
Third, as ICSs become increasingly complicated and time-critical, the human operators require higher expertise levels to understand the domain information and detect feints\textcolor{black}{\cite{stouffer2011guide} in time to avoid life-threatening failures or huge economic losses. The SOCs in ICSs are usually understaffed, due to these high-standard requirements.}  
\textcolor{black}{Fourth}, since human operators behave differently, and IDoS attacks are a broad class of adaptive attacks, it is challenging (yet highly desirable) to develop a customized and resilient defense. 
\textcolor{black}{Due to the above factors, including the huge economic loss,} 
there is an apparent need to understand this class of proactive attentional attacks, quantify its consequences and risks, and develop associated mitigation strategies. 

\begin{figure*}[]
\centering
 \includegraphics[width=1\linewidth]{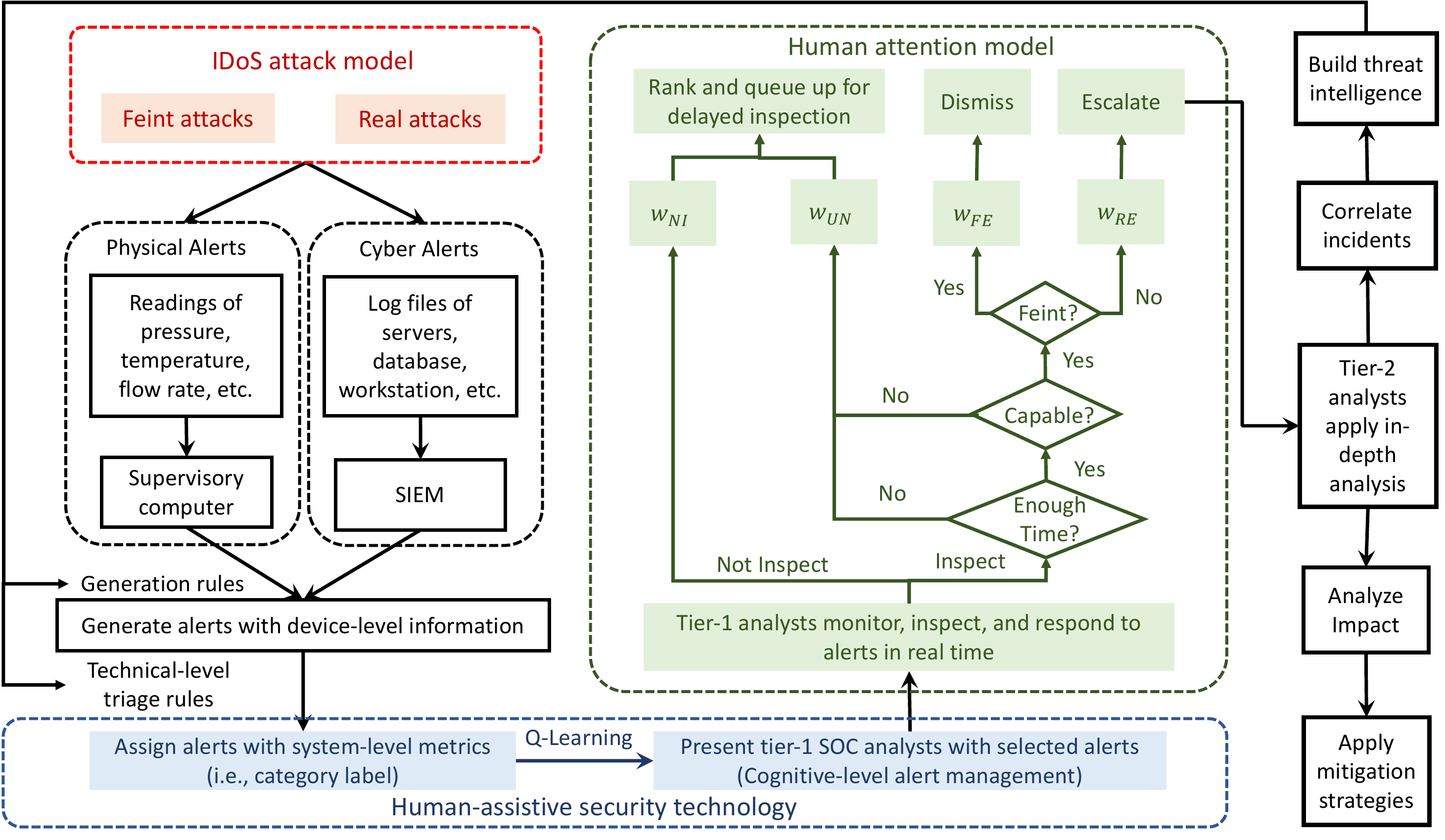}
  \caption{
  The overview diagram of RADAMS against IDoS in ICS, which incorporates the IDoS attack model, the human attention model, and the human-assistive security technology in the red, green, and blue boxes, respectively. 
  \textcolor{black}{RADAMS consolidates the \textit{technical-level} (i.e., generation rules and triage rules in black) and the \textit{cognitive-level} (data-driven human-aware alert  de-emphasis in blue) alert management before the manual inspection in green to reduce the operators' cognitive load.} 
  The modern SOC adopts a hierarchical alert analysis process. 
  The tier-$1$ SOC analysts, also referred to as the operators, are in charge of real-time alert monitoring and inspections. The tier-$2$ SOC analysts are in charge of the in-depth analysis.  
  \textcolor{black}{All processes in black are not the focus of this work.} 
  }
\label{fig:AlertDiagram}
\end{figure*}

To this end, we establish a holistic model of the IDoS attacks, the alert generations, and the human operators' \textcolor{black}{alert responses}. 
In the IDoS attack model, we adopt a Markov renewal process to characterize the sequential arrival of feints and real attacks that target different ICS \textcolor{black}{assets}. 
We \textcolor{black}{define a} \textit{revelation probability} to abstract the alert generation and triage process of existing detection systems. The revelation probability maps the attacks' hidden types and targets stochastically to the associated alerts' observable category labels. 
To model the human operators' attention dynamics and alert responses under the IDoS attacks, we directly incorporate the operators' levels of expertise, stress, and efficiency into the security design based on the existing results from the literature in psychology, including the Yerkes–Dodson law \cite{yerkes1908relation} and the sunk cost fallacy \cite{arkes1985psychology}. 
To assist human operators in alert inspection and response, compensate for their attentional vulnerabilities, and combat IDoS attacks,  we develop human-centered technologies that selectively make some alerts less noticeable based on their category labels. 
Reinforcement learning is applied to make the human-assistive security technology \textit{resilient}, \textit{automatic}, and \textit{adaptive} to various human models and attack scenarios. 

Fig. \ref{fig:AlertDiagram} illustrates the overview diagram of Resilient and Adaptive Alert and Attention Management Strategy (RADAMS). 
\textcolor{black}{We use the following control room scenario to elaborate on the entire process of RADAMS under IDoS attacks. 
Supervisory computers and Security Information and Event Management (SIEM) continuously monitor the physical readings and cyber log files, respectively, to generate alerts with \textit{device-level} information. 
Since manual inspection and response of these alerts (illustrated in green) are indispensable for ICSs at the current stage, RADAMS adopts the following technical-level and cognitive-level automated alert selection schemes, illustrated in black and blue, respectively, to assist manual alert inspection. 
The technical-level alert selection scheme focuses on selecting and prioritizing alerts based on the \textit{device-level} information and abstract \textit{system-level} metrics.
Although the above alert triage process significantly reduces the workload of the manual inspection, a sizeable number of alerts remain to be inspected, especially under a large volume of feints. 
To this end, RADAMS incorporates the cognitive-level alert selection to accommodate the operators' cognition limitation in the subsequent alert inspections. 
After the technical-level and cognitive-level alert management, RADAMS presents the selected alerts to the tier-$1$ SOC analysts in the control room for real-time monitoring and response. 
The alerts associated with the real attack  will be identified and escalated to tier-2 analysts for in-depth analysis. 
The analysis outcomes of tier-$2$ analysts are used to mitigate the current threats and improve the generation rules and technical-level triage rules. 
}

RADAMS enriches the existing alert selection frameworks with the IDoS attack model, the human attention model, and the human-assistive security technology highlighted in red, green, and blue, respectively. 
Through the integrated modeling and theoretical analysis, we obtain the \textit{Product Principle of Attention} (PPoA), which states that the Attentional Deficiency Level (ADL), i.e., the probability of incomplete alert responses, and the risk of IDoS attacks depend on the product of the supply and the demand of human attention resources. 
The closed-form expressions under mild assumptions lead to several fundamental limits, including the minimum ADL and the maximum length of de-emphasized alerts to reduce IDoS risk. 
We explicitly characterize the tradeoff among crucial factors such as the ADL, the reward of alert attention, and the impact of alert inattention. 

Finally, we propose an algorithm to learn the adaptive \textcolor{black}{Attention Management (AM)} strategy based on the operator's alert inspection outcomes. 
We present several case studies based on the simulation of different IDoS attacks and alert inspecting processes. 
The numerical results show that the proposed optimal AM strategy outperforms the default strategy and can effectively reduce the IDoS risk by as much as $20\%$. 
The strategy is also resilient to a large range of cost variations, attack frequencies, and human attention capacities. 
We have observed the phenomenon of \textit{attentional risk equivalency}, which states that the deviation from the optimal to sub-optimal strategies for some category labels can reduce the risk under the default strategy to approximately the same level. 
The results also corroborate that RADAMS can adapt to different category labels to strike a balance of quantity (i.e., inspect more alerts) and quality (i.e., complete alert responses to dismiss feints and escalate real attacks).  
We identify the \textit{attacker's dilemma} where destructive IDoS attacks induce unbearable costs to the attacker. 
We also identify the \textit{half-truth attack strategy} as the optimal IDoS attack strategy when feints are generated at a high cost. 

\subsection{Contribution, Notations, and Organization of the Paper}
\textcolor{black}{
Our main contributions are fourfold. 
First, we have formally defined a new type of attentional attacks called IDoS attacks. 
Second, we propose a consolidated alert and attention management strategy that is explicitly aware of human cognition limitations to defend against IDoS attacks. 
Third, we provide theoretical underpinnings of RADAMS under IDoS attacks and propose a learning algorithm to implement RADAMS in real time. 
Fourth, we present comprehensive case studies to demonstrate the effectiveness, adaptiveness, robustness, and resilience of the proposed assistive strategies.}

The rest of the paper is organized as follows. 
The related work is presented in Section \ref{sec:related works}. 
Sections \ref{sec:IDoS attack model}, \ref{sec:human model}, and \ref{sec:Data-Driven Attention Management} introduce the IDoS attack model, the human operator model, and the human-assistive security technology, respectively. 
\textcolor{black}{We summarize main notations for these three sections in Table \ref{table:notation1}, \ref{table:notation2}, and \ref{table:notation3}, respectively.}  
We analyze the attentional deficiency level and the risk of IDoS attacks in closed form for the class of ambitious operators in Section \ref{sec:Theoretical Analysis}\textcolor{black}{, where the main notations are summarized in Table \ref{table:notation4}}. Section \ref{sec:case study} presents a case study of alert inspection under IDoS attacks and the adaptive AM strategies. Section \ref{sec:conclusion} concludes the paper.

\section{Related Work}
\label{sec:related works}

\subsection{Alert Management}
Previous works have applied various alert management methods during the alert generation, detection, and response processes to mitigate alert fatigue and enhance cybersecurity, \textcolor{black}{as shown in the following three subsections}.  

\subsubsection{Source Management} 
\label{subsubsec:generation}

On the one hand, proactive defense \cite{HUANG2020101660} and deception techniques, including honeypots \cite{huang2019adaptive,huang2020farsighted} and moving target defense \cite{jajodia2011moving}, have managed to \textcolor{black}{reduce alerts at the outset by deterring, delaying, and preventing attacks.} 
On the other hand, previous works have designed incentive mechanisms (e.g., \cite{casey2016compliance,liu2009mitigating}) and information mechanisms (e.g., \cite{DG2021,huang2022zetar}) to enhance insiders' compliance, reduce users' misbehavior, and consequently reduce false positives. 

\subsubsection{Detection Management}
\label{subsubsec:detection}
A rich literature has attempted to develop detection systems capable of reducing false positives while maintaining the ability to detect malicious behaviors.
Methods include statistical analysis \cite{spathoulas2010reducing}, fuzzy inference \cite{elshoush2010reducing}, \textcolor{black}{kernel density estimation \cite{su2019false}, and machine learning approaches \cite{goeschel2016reducing,pietraszek2005data,ohta2008minimizing,bouzar2020rnn}.} 
Alert aggregation and correlation methods \cite{salah2013model} have also been applied to dismiss repeated and innocuous alerts and generate alerts of system-level threat information. 
\textcolor{black}{Recently, the authors in \cite{bryant2020improving} have implemented a hybrid kill-chain based classification model to boost detection rates, improve  alert description, and lower the number of false-positive alerts. 
There is a rich literature on alert filtering and selection, and we refer the readers to \cite{cotroneo2017empirical} for the empirical analysis and validation of these state-of-the-art filtering techniques.} 

\subsubsection{Response Management}
\label{subsubsec:response}
Despite the significant advances in alert reduction methods introduced in Section \ref{subsubsec:generation} and \ref{subsubsec:detection}, the demand for alert inspection still exceeds the operators' capacity. 
To this end, researchers have developed various alert \textcolor{black}{triage} and prioritization approaches \textcolor{black}{that can be classified into the following three categories.}

\textcolor{black}{The first category ranks alerts based on rules. These rules can be generated through fuzzy logic \cite{newcomb2016effective,alsubhi2012fuzmet} and attack graphs \cite{noel2008optimal}. 
Many works have attempted to learn from security experts and automate the process of mining triage rules out of cybersecurity analysts’ operation traces \cite{zhong2016automate,zhong2018learning}. 
The second category assigns scores to alerts and quantitatively optimizes the alert triage process by minimizing the cyber risk.  
The score can be computed through a causal dependency graph of an alert event \cite{hassan2019nodoze}, game-theoretic approaches \cite{laszka2017game}, and the 
Quantitative Value Function (QVF) hierarchy process \cite{TIFS8573836}. 
The authors in \cite{TIFS8573836,ganesan2016dynamic} further incorporate organization-specific factors and constraints into the design of the optimal alert selection. 
The third category relies on data and learning methods. 
Supervised learning \cite{renners2017modeling,bierma2016learning}, deep learning \cite{mcelwee2017deep,aminanto2020threat}, and adversarial reinforcement learning \cite{tong2020finding} are used to prioritize alerts. 
The authors in \cite{zhong2018cyber} have developed a triage operation retrieval system to provide novice analysts with on-the-job suggestions using relevant data triage operations conducted by senior analysts.}

\textcolor{black}{The above three categories of \textit{rule-based}, \textit{risk-aware}, and \textit{data-driven} alert triage methods rank alerts based on their contextual information and organizational factors. Our \textit{human-centered} approach generalizes these classical alert triage approaches by}  explicitly modeling the attentional behaviors of human operators and selecting alerts based on human cognitive capacity. 

\subsection{Feint Attacks and Human Attentional Models}

\textcolor{black}{Feints have been widely studied in sports, military, and biology \cite{Vaccine2017}. 
They are recently used to attack}  
detection systems \cite{CORONA2013201}. 
In particular, the authors in \cite{mutz2003experience, patton2001achilles} have developed tools that can generate false positives by matching detection signatures. 
The tools are tested on SNORT \cite{roesch1999snort}, and the empirical results verify the feasibility of feint attacks on detection systems. 
Compared to \textcolor{black}{these empirical practices of feint attacks that exploit} the vulnerability of detection systems, we focus on the attentional vulnerabilities and the impact of feints on human operators. 
Moreover, we abstract models to formally characterize cyber feint attacks, \textcolor{black}{quantify the risk, and develop human-assistive security technologies}.  

We can classify human vulnerabilities into \textit{acquired} vulnerabilities (e.g., lack of security awareness and noncompliance) and \textit{innate} ones (e.g., bounded attention and rationality) based on whether they can be mitigated through short-term training and security rules. 
Many works (e.g., \cite{wang2021social,DG2021,casey2016compliance}) have emphasized the urgency and necessity to reduce acquired human vulnerability and proposed human-assistive strategies. 
However, few works have focused on mitigation strategies for innate vulnerabilities. 
Visual support systems have been used for rapid cyber event triage \cite{miserendino2017threatvectors} and alert investigations \cite{franklin2017toward}, and eye-tracking data have been incorporated to enhance attention for phishing identification \cite{huang2021advert9819920}. 
The authors in \cite{sundaramurthy2015human} perform an anthropological study in a corporate SOC to model and mitigate security analyst burnout. 
These works lay the foundations of empirical solutions to mitigate human attentional vulnerabilities. Our work combines real-time human behavioral and decision data with the well-identified human factors to enable quantitative characterizations of the empirical relationship such as the Yerkes–Dodson law \cite{yerkes1908relation}. 
The learning-based method for attention management also makes our human-assistive technology adaptive and transferable to various human-technical systems.

\section{IDoS Attacks and Sequential Alert Arrivals}
\label{sec:IDoS attack model}
\begin{table}[h]
\centering
\caption{Summary of Notations in Section \ref{sec:IDoS attack model}
\label{table:notation1}}
\textcolor{black}{
\begin{tabularx}{\columnwidth}{l X} 
     \hline
\textbf{Variable} &  \textbf{Meaning} \\ \hline
$t^k\in [0,\infty)$  &  Arrival time of the $k$-th attack. \\
$\tau^k=t^{k+1} - t^k\in [0,\infty)$       &  Inter-arrival time at attack stage $k\in \mathbb{Z}^{0+}$. \\
$\kappa_{AT}\in \mathcal{K}_{AT}$  & Transition kernel of attacks.\\
$z\in \mathcal{Z}$ &  Probability Density Function (PDF) of the inter-arrival time.\\
$\theta^k\in \Theta:=\{\theta_{FE},\theta_{RE}\}$       &  Attack's type at  attack stage $k\in \mathbb{Z}^{0+}$. \\
$\phi^k\in \Phi$       &  Attack's target at attack stage $k\in \mathbb{Z}^{0+}$.\\
$s^k\in \mathcal{S}$       &  Alert's category label at attack stage $k$.\\
$o(s^k|\theta^k,\phi^k)$ & Revelation kernel of category labels. \\
$b(\theta^k,\phi^k)$ & Steady-state distribution.\\
$\kappa_{CL}\in \mathcal{K}_{CL}$  & Transition kernel of category labels.\\
\hline
\end{tabularx}
}
\end{table}
As illustrated in the first column of Fig. \ref{fig:AlertDiagram}, after the IDoS attacker has generated feint and real attacks, the detection system monitors the readings from physical layers and log files from cyber layers and generates alerts according to the \textit{generation rules}. 
Then, the alerts are sent to the SOC and a triage system automatically generates their category labels (e.g., the alerts' criticality) based on the \textit{\textcolor{black}{technical-level triage} rules}. 
The rules for alert generation and triage are pre-defined and their designs are not the focus of this work. 

\subsection{Feint and Real Attacks of Heterogeneous Targets}
\label{subsec:feint and real}

After \textcolor{black}{the essential preparation stages (e.g., initial intrusion, privilege escalation, and lateral movement), IDoS attacks identify the vulnerable assets as the attack targets and gain control of the ICS to} launch feint and real attacks sequentially, as illustrated by the solid red arrows in Fig. \ref{fig:arrivaldiagram}. 
With a deliberate goal of triggering alerts, feint attacks require fewer resources to craft. 
Although feints have limited impacts on the target system, they aggravate the alert fatigue by depleting human attention resources and preventing human operators from a timely response to real attacks. 
For example, the attacker can attempt to access a database with wrong credentials intentionally, and in the meantime, gradually changes the temperature of the reactor of a nuclear power plant. 
The repeated log-in attempts trigger an excessive number of alerts so that the overloaded human operators fail to pay sustained attention and respond timely to the sensor alerts of the temperature deviation. 

\begin{figure*}[h]
\centering
 \includegraphics[width=1\linewidth]{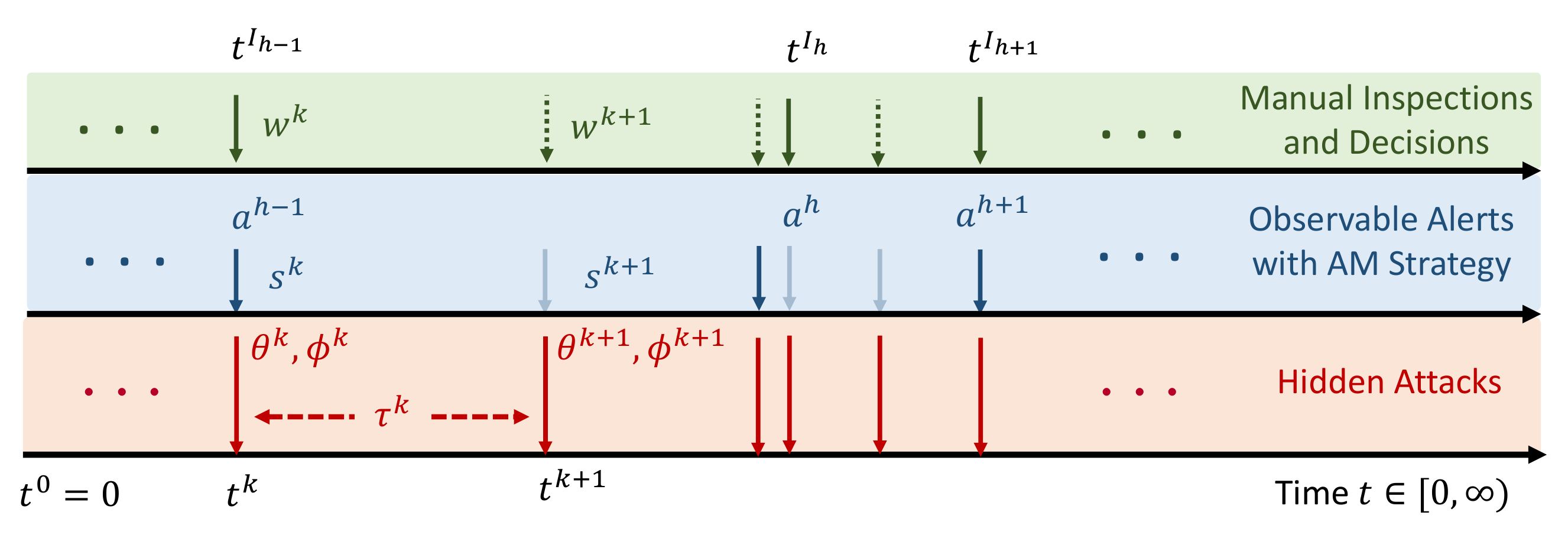}
  \caption{
The timelines of an IDoS attack, alerts under AM strategies, and manual inspections are depicted in red, blue, and green, respectively. 
The inspection stage $h\in \mathbb{Z}^{0+}$ is equivalent to the attack stage $I_h\in \mathbb{Z}^{0+}$. 
The red arrows represent the sequential arrivals of feints and real attacks. 
The semi-transparent blue and the dashed green arrows represent the de-emphasized alerts and the alerts without inspections, respectively. 
  }
\label{fig:arrivaldiagram}
\end{figure*}

We denote feint and real attacks as $\theta_{FE}$ and $\theta_{RE}$, respectively, where $\Theta:=\{\theta_{FE},\theta_{RE}\}$ is the set of attacks' types. 
Each feint or real attack can target cyber \textcolor{black}{assets} (e.g., servers, databases, and workstations) or physical \textcolor{black}{assets} (e.g., sensors of pressure, temperature, and flow rate) in the ICS. 
We define $\Phi$ as the set of the potential attack targets. 
The stochastic arrival of these attacks is modeled as a Markov renewal process where $t^k, k\in \mathbb{Z}^{0+}$, is the time of the $k$-th arrival. 
We refer to the $k$-th attack equivalently as the attack at \textit{attack stage} $k\in \mathbb{Z}^{0+}$ and let $\theta^k\in \Theta$ and $\phi^k\in \Phi$ be the attack's type and target at attack stage $k\in \mathbb{Z}^{0+}$, respectively.  
Define $\kappa_{AT}\in \mathcal{K}_{AT}: \Theta\times \Phi \times \Theta\times \Phi \mapsto [0,1]$ as the transition kernel, where $\kappa_{AT}(\theta^{k+1},\phi^{k+1}|\theta^{k},\phi^{k})$ denotes the probability that the $(k+1)$-th attack has type $\theta^{k+1}\in \Theta$ and target $\phi^{k+1}\in \Phi$ when the $k$-th attack has type $\theta^{k}\in \Theta$ and target $\phi^{k}\in \Phi$. 
The inter-arrival time $\tau^k:=t^{k+1}-t^k$ is a continuous random variable with support $[0,\infty)$ and Probability Density Function (PDF) $z\in \mathcal{Z}:  \Theta\times \Phi \times \Theta\times \Phi \mapsto \mathbb{R}^{0+}$, where  $z(t|\theta^{k+1},\phi^{k+1}, \theta^{k},\phi^{k})$ is the probability that the inter-arrival time is $t$ when the attacks' types and targets at attack stage $k$ and $k+1$ are $\theta^{k},\phi^{k}$ and $\theta^{k+1},\phi^{k+1}$, respectively. 
The values of $\kappa_{AT}\in \mathcal{K}_{AT}$ and $z\in \mathcal{Z}$ are unknown to human operators and the designer of RADAMS. Attackers can adapt $\kappa_{AT}$ and $z$ to different ICSs and alert inspection schemes to achieve the attack goals. 
We formally define IDoS attacks in Definition \ref{def:IDoS}. 
\begin{definition}[\textbf{IDoS Attacks}]
\label{def:IDoS}
An IDoS attack is a sequence of feint and real attacks of heterogeneous targets, which can be characterized by the $4$-tuple $(\Theta, \Phi, \mathcal{K}_{AT}, \mathcal{Z})$. 
\end{definition}

\subsection{\textcolor{black}{Technical-Level} Alert Triage and System-Level Metrics} 
\label{subsec:alert triage}
The alerts triggered by IDoS attacks contain \textit{device-level} contextual information, including the software version, hardware parameters, existing vulnerabilities, and security patches. 
The alert triage process consists of rules that map the device-level information to \textit{system-level} metrics, which helps human operators make timely responses. 
Some essential metrics are listed as follows. 
\begin{itemize}
     \item \textbf{Source} $s_{SO}\in \mathcal{S}_{SO}$:  The ICS sensors or the cyber \textcolor{black}{assets} that the alerts are associated with. 
     \item \textbf{Time Sensitivity} $s_{TS}\in \mathcal{S}_{TS}$: The length of time that the potential attack needs to achieve its attack goals. 
     \item \textbf{Complexity} $s_{CO}\in \mathcal{S}_{CO}$: 
     The degree of effort that a human operator takes to inspect the alert.  
    \item \textbf{Susceptibility} $s_{SU}\in \mathcal{S}_{SU}$:  
    The likelihood that the attack succeeds and inflicts damage on the protected system. 
    \item \textbf{Criticality} $s_{CR}\in \mathcal{S}_{CR}$: 
    The consequence or the impact of the attack's damage. 
\end{itemize}
These alert metrics are observable to the human operators and the RADAMS designer  and form the \textit{category label} of an alert. 
We define the category label associated with the $k$-th alert as $s^k:=(s_{SO}^k,s_{TS}^k,s_{CO}^k,s_{SU}^k,s_{CR}^k) \in \mathcal{S}$, where $\mathcal{S}:=\mathcal{S}_{SO}\times  \mathcal{S}_{TS}\times \mathcal{S}_{CO}\times \mathcal{S}_{SU}\times  \mathcal{S}_{CR}$. 
The joint set $\mathcal{S}$ can be adapted to suit the organization's needs in the security practice. For example, we have  $\mathcal{S}_{TS}=\emptyset$ if time sensitivity is unavailable or unimportant. 


The \textcolor{black}{technical-level} alert triage process establishes a stochastic connection between the hidden types and targets of the IDoS attacks and the observable category labels of the associated alerts. 
Let $o(s^k|\theta^k,\phi^k)$ be the probability of obtaining category label $s^k\in \mathcal{S}$, when the associated attack has type $\theta^k\in \Theta$ and target $\phi^k\in \Phi$. 
The revelation kernel $o$ reflects the quality of the alert triage. 
For example, feints with lightweight resource consumption usually have a limited impact. Thus, a high-quality triage process should classify the associated alert as low criticality with a high probability. 
Letting $b(\theta^k,\phi^k)$ denote the probability that the $k$-th attack has type $\theta^k$ and target $\phi^k$ at the steady-state, we can compute the steady-state distribution $b$ in closed form based on $\kappa_{AT}$.  
Then, the transition of category labels at different attack stages is also Markov and is represented by $\kappa_{CL}\in \mathcal{K}_{CL}:\mathcal{S}\times\mathcal{S}\mapsto [0,1]$. 
We can compute $\kappa_{CL}= \frac{ \Pr(s^{k+1},s^k) }{ \sum_{s^{k+1}\in \mathcal{S}} \Pr(s^{k+1},s^k) }$  based on $\kappa_{AT}, o, b$, 
where $\Pr(s^{k+1},s^k) =  \sum_{\theta^k,\theta^{k+1}\in \Theta} \sum_{\phi^k,\phi^{k+1}\in \Phi}
   \kappa_{AT}(\theta^{k+1},\phi^{k+1}|\theta^k,\phi^k)  \allowbreak 
   o(s^k|\theta^k,\phi^k)  \allowbreak
   o(s^{k+1}|\theta^{k+1},\phi^{k+1}) \allowbreak
   b(\theta^k,\phi^k) 
   $. 
In this work, we focus on the case where the detection system introduces the same delay between attacks and their triggered alerts. 
Since the sequences of attacks and alerts have a one-to-one mapping, we can consider zero delay time without loss of generality. Hence, the sequence of alerts associated with an IDoS attack  $(\Theta, \Phi,\mathcal{K}_{AT}, \mathcal{Z})$ is also a Markov renewal process characterized by the $3$-tuple $(\mathcal{S},  \mathcal{K}_{CL},\mathcal{Z})$. 


\section{Human Attention Model under IDoS Attacks}
\label{sec:human model}

\begin{table}[h]
\centering
\caption{Summary of Notations in Section \ref{sec:human model}
\label{table:notation2}}
\textcolor{black}{
\begin{tabularx}{\columnwidth}{l X} 
     \hline
\textbf{Variable} &  \textbf{Meaning} \\ \hline
$w_{FE},w_{RE},w_{UN},w_{NI}$    & Alert dismissal, alert escalation, uninspected alerts, and inadequate alert response.  \\
$w^k\in \mathcal{W}$  &  Operator's alert response at attack stage $k$.\\
$\kappa^{\Delta k}_{SW}(s^{k+\Delta k}|s^k)$ & Operator's default switching probability. \\
$D_{max} (s^k)\in \mathbb{R}^+$ & Maximum Allowable Delay (MAD) for responding to alerts of category label $s^{k}\in \mathcal{S}$.\\ 
$t_{AoI}^k=t-t^k$ & $k$-th alert's Age of Information (AoI).\\
$y_{EL}\in \mathcal{Y}_{EL}$ & Operator's expertise level. \\
$\bar{d}(y_{EL},s^k, \theta^k, \phi^k)\in \mathbb{R}^+$ & Average inspection time to reach a complete alert response $w_{FE}$ or $w_{RE}$.\\
${d}(y_{EL},s^k, \theta^k, \phi^k)$ & Actual Inspection Time Needed (AITN).\\
$n^t\in \mathbb{Z}^{0+}$ & Number of alerts that arrive during the current inspection up to time $t\in [0,\infty)$. \\ 
$y_{SL}^t=f_{SL}(n^t)\in \mathbb{R}^+$ & Operator's stress level at time $t$. \\
$\omega^t=f_{LOE}(y_{SL}^t)\in [0,1]$ & Operator's Level of Operational Efficiency (LOE) at time $t$. \\
$\bar{n}(y_{EL},s^k)\in \mathbb{R}^{0+}$  & Attention threshold.\\
$\tilde{\omega}^{t_1,t_2}:=\int_{t_1}^{t_2} \omega^t d{t}$ & Effective Inspection Time (EIT) during inspection time $[t_1,t_2]$. \\
$p_{SP}(y_{EL},s^k,\theta^k, \phi^k)$ & Probability of a complete response.\\ 
\hline
\end{tabularx}
}
\end{table}
An SOC typically adopts a hierarchical alert analysis \cite{zimmerman2014cybersecurity}. 
The attention model in this section applies to the tier-$1$ SOC analysts, or the operators, who are in charge of monitoring, inspecting, and responding to alerts in real time. 
As illustrated by the green box in Fig. \ref{fig:AlertDiagram}, the operators choose to inspect certain alerts, dismiss the feints, and escalate the real attacks to tier-$2$ SOC analysts for in-depth analysis. 
The in-depth analysis can last hours to months, during which the tier-$2$ analysts correlate incidents from different \textcolor{black}{assets} in the ICS over long periods to build threat intelligence and analyze the impact. 
The threat intelligence is then incorporated to form and update the generation rules of the detection system and \textcolor{black}{triage} rules of the triage process. 


\subsection{Alert Responses}
\label{subsec:Alert Responses}
Due to the high volume of alerts and the potential short-term surge arrivals, human operators cannot inspect all alerts in real time. The uninspected alerts receive an alert response $w_{NI}$. 
Whether the operator chooses to inspect an alert depends on the switching probability in Section \ref{subsec:Real-Time Monitoring}. 

When the operator inspects an alert, he can be distracted by the arrival of new alerts and switch to newly-arrived alerts without completing the current inspection. 
We elaborate on the attention dynamics in Section \ref{subsec:Attentional Factors}. 
The alert with incomplete inspection is labeled by $w_{UN}$. 
Besides the insufficient inspection time, the operator's cognitive capacity constraint can also prevent him from determining whether the alert is triggered by a feint or a real attack. 
In this work, we consider prudent operators. When they cannot determine the attack's type after a full inspection, the associated alert is labeled as $w_{UN}$, \textcolor{black}{as shown in the green flowchart of Fig. \ref{fig:AlertDiagram}}. 
We elaborate on how the insufficient inspection time and the operator's cognitive capacity constraint lead to $w_{UN}$, i.e., referred to as the \textit{inadequate alert response}, in Section \ref{subsec:inspection and decision-making}. 
The alerts labeled as $w_{NI}$ and $w_{UN}$ are ranked and queued up for delayed inspections at later stages. 


When the operator successfully completes the alert inspection with a deterministic decision, he either dismisses the alert (denoted by $w_{FE}$) or escalates the alert to tier-$2$ SOC analysts for in-depth analysis (denoted by $w_{RE}$), as shown in Fig. \ref{fig:AlertDiagram}. 
We use $w^k\in \mathcal{W}:=\{w_{FE},w_{RE},w_{UN},w_{NI}\}$ to denote the operator's response to the alert at attack stage $k\in \mathbb{Z}^{0+}$. 
We can extend the set $\mathcal{W}$ to suit the organization's security practice. For example, some organizations let the operators report their estimations and confidence levels concerning incomplete alert inspection, i.e., divide the label $w_{UN}$ into finer subcategories.  
Then at later stages, the delayed inspection can prioritize the alerts based on the estimations and confidence levels. 

\subsection{Probabilistic Switches within Allowable Delay}
\label{subsec:Real-Time Monitoring}
Alerts are monitored in real time when they arrive.
When the category label of the new alert indicates higher time sensitivity, susceptibility, or criticality, the operator can delay the current inspection (i.e., label the alert under inspection as $w_{UN}$) and switch to inspect the new alert. 
We denote $\kappa^{\Delta k}_{SW}(s^{k+\Delta k}|s^k)$ as the operator's \textcolor{black}{default} switching probability when the previous alert at attack stage $k$ and the new alert at stage $k+\Delta k, \Delta k\in \mathbb{Z}^{+}$, have category label $s^k\in \mathcal{S}$ and $s^{k+\Delta k}\in \mathcal{S}$, respectively. 
As a probability measure, 
\begin{equation}
\label{eq:prob measure}
    \sum_{\Delta k=1}^{\infty} \sum_{s^{k+\Delta k} \in \mathcal{S}} \kappa^{\Delta k}_{SW}(s^{k+\Delta k}|s^k) \equiv 1 , \forall k\in \mathbb{Z}^{0+}, \forall s^k\in \mathcal{S}. 
\end{equation}
Since the operator cannot observe the attack's hidden type and hidden target, the switching probability $\kappa_{SW}^{\Delta k}$ is independent of $\theta^k,\phi^k$ and $\theta^{k+1},\phi^{k+1}$. 
The switching probability depends on the time that the operator has already spent on the current inspection. 
For example, an operator becomes less likely to switch after spending a long time inspecting an alert of low criticality or beyond his capacity, which can lead to the Sunk Cost Fallacy (SCF).  

We denote \textcolor{black}{$D_{max}(s^{k})\in \mathbb{R}^{+}$} as the Maximum Allowable Delay (MAD) \textcolor{black}{for alerts of category label $s^{k}\in \mathcal{S}$}.  
At time $t\geq t^k$, the $k$-th alert's Age of Information (AoI) \cite{Yates2021AgeOI} is defined as $t^k_{AoI}:=t-t^k$. 
This work focuses on time-critical ICSs where a defensive response for the $k$-th alert of  \textcolor{black}{category label $s^{k}\in \mathcal{S}$} is only effective if the alert's AoI is within the MAD, i.e., \textcolor{black}{$t^k_{AoI}\leq D_{max}(s^{k})$}.  
Therefore, the operator will be reminded when an alert's AoI exceeds the MAD so that he can switch to monitor and inspect new alerts. 
The MAD and the reminder scheme help mitigate the SCF when the operators are occupied with old alerts and miss the chance to monitor and inspect new alerts in real time.

\subsection{Attentional Factors}
\label{subsec:Attentional Factors}
We identify the following human and environmental factors affecting operators' alert inspection and response processes. 
\begin{itemize}
    \item The operator's expertise level denoted by $y_{EL}\in \mathcal{Y}_{EL}$. 
    \item The $k$-th alert's category label $s^k\in \mathcal{S}$. 
    \item The $k$-th attack's type $\theta^k$ and target $\phi^k$. 
    \item The operator's stress level $y_{SL}^t\in \mathbb{R}^+$, which changes with time $t$ as new alerts arrive. 
\end{itemize}

The first three factors are the static attributes of the analyst, the alert, and the IDoS attack, respectively. 
They determine the average inspection time, denoted by $\bar{d}(y_{EL},s^k, \theta^k, \phi^k)\in \mathbb{R}^+$, to reach a \textit{complete response} $w_{FE}$ or $w_{RE}$. 
For example, if the inspected alert is of low complexity, the operator can reach a complete response in a shorter time. Also, it takes a senior operator less time on average to reach a complete alert response than a junior one does.  
We use $d(y_{EL},s^k,\theta^k, \phi^k)$ to represent the Actual Inspection Time Needed (AITN) when the operator is of expertise level $y_{EL}$, the alert is of category label $s^k$, and the attack has type $\theta^k$ and target $\phi^k$. AITN $d(y_{EL},s^k,\theta^k, \phi^k)$ is a random variable with mean $\bar{d}(y_{EL},s^k,\theta^k, \phi^k)$. 

The fourth factor reflects the temporal aspect of human attention during the inspection process. 
Evidence has shown that the continuous arrival of the alerts can increase the stress level of human operators \cite{ancker2017effects}, and $52\%$ of employees attribute their mistakes to stress \cite{Tessian}. 
We denote $n^t\in \mathbb{Z}^{0+}$ as the number of alerts that arrives during the current inspection up to time $t\in [0,\infty)$ and model the operator's stress level $y_{SL}^t$ as an increasing function $f_{SL}$ of $n^t$, i.e., $y_{SL}^t=f_{SL}(n^t)$. 
At time $t\in [0,\infty)$, the human operator's Level of Operational Efficiency (LOE), denoted by $\omega^t\in [0,1]$, is a function $f_{LOE}$ of the stress level $y_{SL}^t$, i.e., 
\begin{equation}
    \omega^t=f_{LOE}(y_{SL}^t)=(f_{LOE}\circ f_{SL})(n^t), \forall t\in [0,\infty). 
\end{equation}
Based on the Yerkes–Dodson law, the function $f_{LOE}$ follows an inverse $U$-shape that contains the following two regions.  
In region one, a small number of alerts result in a moderate stress level and allow human operators to inspect the alert efficiently. 
In region two, the LOE starts to decrease when the number of alerts to inspect is beyond some threshold \textcolor{black}{$\bar{n}(y_{EL},s^k)\in \mathbb{R}^{0+}$,} and the human operator is overloaded. 
The value of the \textit{attention threshold} $\bar{n}(y_{EL},s^k)$ depends on the operator's expertise level $y_{EL}\in \mathcal{Y}_{EL}$ and the alert's category label $s^k\in \mathcal{S}$. 
For example, it requires more (resp. fewer) alerts (i.e., higher (resp. lower) attention threshold) to overload a senior (resp. an inexperienced) operator. 
We can also adapt the value of $\bar{n}(y_{EL},s^k)$ to different scenarios. In the extreme case where all alerts are of high complexity and create a heavy cognitive load, we let $\bar{n}(y_{EL},s^k)=0, \forall y_{EL}\in \mathcal{Y}_{EL},s^k\in \mathcal{S}$, and the LOE decreases monotonously with the number of alert arrivals during an inspection. 



\subsection{Alert Responses under Time and Capacity Limitations}
\label{subsec:inspection and decision-making}
After we identify attentioinal factors in Section \ref{subsec:Attentional Factors}, we illustrate their impacts on the operators' alert responses as follows.  
We define the Effective Inspection Time (EIT) during inspection time $[t_1,t_2]$ as the integration  $\tilde{\omega}^{t_1,t_2}:=\int_{t_1}^{t_2} \omega^t d{t}$. 
When the operator is overloaded and has a low LOE during $[t_1,t_2]$, the EIT $\tilde{\omega}^{t_1,t_2}$ is much shorter than the actual inspection time $t_2-t_1$. 

Suppose that the operator of expertise level $y_{EL}$ inspects the $k$-th alert for a duration of $[t_1,t_2]$. 
If the EIT has exceed the AITN $d(y_{EL},s^k,\theta^k, \phi^k)$, then the operator can reach a complete response $w_{FE}$ or $w_{RE}$ with a high success probability denoted by $p_{SP}(y_{EL},s^k,\theta^k, \phi^k)\in [0,1]$. 
However, when $\tilde{\omega}^{t_1,t_2}<d(y_{EL},s^k,\theta^k, \phi^k)$, it indicates that the operator has not completed the inspection, and the alert response concerning the $k$-th alert is $w^k=w_{UN}$. 
The success probability $p_{SP}$ depends on the operator's capacity to identify attacks' types, which leads to the definition of the capacity gap below. 
\begin{definition}[\textbf{Capacity Gap}]
\label{def:capacity gap}
For an operator of expertise level $y_{EL}\in \mathcal{Y}_{EL}$, we define $p_{CG}(y_{EL},s^k,\theta^k, \phi^k):=1-p_{SP}(y_{EL},s^k,\theta^k, \phi^k)$ as his capacity gap  to inspect an alert with category label $s^k\in \mathcal{S}$, type $\theta^k\in \Theta$, and target $\phi^k\in \Phi$ defined in Section \ref{sec:IDoS attack model}. 
\end{definition}

\section{Human-Assistive Security Technology for \textcolor{black}{Cognitive-Level Alert Management}}
\label{sec:Data-Driven Attention Management}

\begin{table}[h]
\centering
\caption{Summary of Notations in Section \ref{sec:Data-Driven Attention Management}
\label{table:notation3}}
\textcolor{black}{
\begin{tabularx}{\columnwidth}{l X} 
     \hline
\textbf{Variable} &  \textbf{Meaning} \\ \hline
$I_h\in \mathbb{Z}^{0+}$, $t^{I_h}\in [0,\infty)$ & Index and time of the alert under the $h$-th inspection (i.e., inspection stage $h\in \mathbb{Z}^{0+}$). \\
$a_m\in \mathcal{A}$ & Attention management (AM) strategy of period $m\in \mathbb{Z}^+$. \\
$a^h\in \mathcal{A}$ & AM action at inspection stage $h\in \mathbb{Z}^{0+}$. \\
$\bar{\kappa}^{I_{h+1}-I_h, a^h}_{SW}(s^{I_{h+1}}|s^{I_h})$ &  Operator's switching probability under $a^h$. \\
$\bar{c}(w^{k},s^{k})\in \mathbb{R}$ & Stage cost. \\
${c}(s^{I_h},a^{h})\in \mathbb{R}$ & Expected Consolidated Cost (ECoC). \\
$\tilde{c}(s^{I_h},a^{h})\in \mathbb{R}$ &  Consolidated Cost (CoC). \\
$\sigma^0,\sigma^*\in \Sigma$  & Default and optimal AM strategy. \\
\hline
\end{tabularx}
}
\end{table}
As illustrated in Section \ref{sec:human model}, the frequent arrival of alerts triggered by IDoS attacks can overload the human operator and reduce the LOE and the EIT. 
To compensate for the human's attentional limitation, we can intentionally make some alerts less noticeable, e.g., without sounds or in a light color, based on their category labels. 
As illustrated by the blue box in Fig. \ref{fig:AlertDiagram}, based on the category labels from the \textcolor{black}{technical-level} triage process, RADAMS automatically emphasizes and de-emphasizes alerts, \textcolor{black}{referred to as the cognitive-level alert management}, and then presents them to the tier 1 SOC analysts. 

\subsection{Adaptive Attention Management Strategy}
\label{subsec:Adaptive AM Strategy}
In this work, we focus on the class of \textcolor{black}{AM} strategies, denoted by $\mathcal{A}:=\{a_m\}_{m\in \{0,1,\cdots,M\}}$, that de-emphasize consecutive alerts. 
As explained in Section \ref{subsec:Alert Responses}, the operator can only inspect some alerts in real time. 
Thus, we use $I_h\in \mathbb{Z}^{0+}$ and $t^{I_h}\in [0,\infty)$ to denote the index and the time of the alert under the $h$-th inspection; i.e., the inspection stage $h\in \mathbb{Z}^{0+}$ is equivalent to the attack stage $I_h\in \mathbb{Z}^{0+}$. 
Whenever the operator starts a new inspection at inspection stage $h\in \mathbb{Z}^{0+}$, RADAMS determines the AM action $a^h\in \mathcal{A}$ for the $h$-th inspection based on the stationary strategy $\sigma\in \Sigma: \mathcal{S} \mapsto \mathcal{A}$ that is adaptive to the category label of the $h$-th alert. 
We illustrate the timeline of the manual inspections and the AM strategies in green and blue, respectively, in Fig. \ref{fig:arrivaldiagram}. 
The solid and dashed green arrows indicate the inspected and uninspected alerts, respectively. 
The non-transparent and semi-transparent blue arrows indicate the emphasized and de-emphasized alerts, respectively. 
At inspection stage $h$, if $a^h=a_m$, RADAMS will make the next $m$ alerts less noticeable; i.e., the alerts at attack stages $I_h+1,\cdots,I_h+m$ are de-emphasized. 
Denote $\bar{\kappa}^{I_{h+1}-I_h, a^h}_{SW}(s^{I_{h+1}}|s^{I_h})$ as the operator's switching probability to these de-emphasized alerts under the AM action $a^h\in \mathcal{A}$. 
    Analogously to \eqref{eq:prob measure}, the following holds for all $h\in \mathbb{Z}^{0+}$ and $a^h\in \mathcal{A}$, i.e.,  
    \begin{equation}
        \label{eq:prob measureAM}
        \sum_{I_{h+1}=I_h+1}^{\infty} \sum_{s^{I_{h+1}} \in \mathcal{S}} \bar{\kappa}^{I_{h+1}-I_h, a^h}_{SW}(s^{I_{h+1}}|s^{I_h}) \equiv 1 , \forall s^{I_h}\in \mathcal{S}. 
    \end{equation}
The deliberate de-emphasis on selective alerts brings the following tradeoff. 
On the one hand, these alerts do not increase the operator's stress level, and the operator can pay sustained attention to the alert under inspection with high LOE and EIT. 
On the other hand, these alerts do not draw the operator's attention, and the operator is less likely to switch to them during the real-time monitoring and inspections.

Since the operator may switch to inspect a de-emphasized alert with switching probability $\bar{\kappa}^{I_{h+1}-I_h, a^h}_{SW}$ (e.g., the $h$-inspection in Fig. \ref{fig:arrivaldiagram}), RADAMS recomputes the AM strategy and implements the new strategy whenever the operator has started to inspect a new alert. 
Although the operator can switch unpredictably, Proposition \ref{prop:Markov} shows that the transition of the inspected alerts' category labels is Markov. 

\begin{proposition}
\label{prop:Markov}
For a stationary AM strategy $\sigma\in \Sigma$, the set of random variables $(\mathbf{S}^{\mathbf{I}_h},\mathbf{T}^{\mathbf{I}_h})_{h\in \mathbb{Z}^{0+}}$ is a Markov renewal process.  
\end{proposition}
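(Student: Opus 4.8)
The plan is to verify directly the defining property of a Markov renewal process for the pair process $(\mathbf{S}^{\mathbf{I}_h},\mathbf{T}^{\mathbf{I}_h})_{h\in\mathbb{Z}^{0+}}$, where $\mathbf{S}^{\mathbf{I}_h}$ is the label of the alert under the $h$-th inspection and $\mathbf{T}^{\mathbf{I}_h}$ its arrival time. Writing $\mathcal{H}_h:=(\mathbf{S}^{\mathbf{I}_0},\dots,\mathbf{S}^{\mathbf{I}_h},\mathbf{T}^{\mathbf{I}_0},\dots,\mathbf{T}^{\mathbf{I}_h})$ for the inspection history, it suffices to establish, for every $h$, $s'\in\mathcal{S}$, and $t\ge 0$,
\[
\Pr\!\bigl(\mathbf{S}^{\mathbf{I}_{h+1}}=s',\,\mathbf{T}^{\mathbf{I}_{h+1}}-\mathbf{T}^{\mathbf{I}_h}\le t \,\big|\, \mathcal{H}_h\bigr)=\Pr\!\bigl(\mathbf{S}^{\mathbf{I}_{h+1}}=s',\,\mathbf{T}^{\mathbf{I}_{h+1}}-\mathbf{T}^{\mathbf{I}_h}\le t \,\big|\, \mathbf{S}^{\mathbf{I}_h}\bigr).
\]
That is, the next inspected label together with the inter-inspection time must be conditionally independent of the past given the current label. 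I would assume throughout the result from the end of Section \ref{sec:IDoS attack model} that the full alert sequence is a Markov renewal process characterized by $(\mathcal{S},\mathcal{K}_{CL},\mathcal{Z})$.

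The core of the argument rests on the strong Markov property of this underlying alert process applied at the inspection index $\mathbf{I}_h$. First I would argue that $\mathbf{I}_h$ is a stopping time: the operator's real-time monitoring reveals alerts one stage at a time, and the decision to switch (hence the identity of the next inspected stage) is adapted to the forward alert filtration enlarged by the independent switching randomization. Consequently, conditioned on $\mathbf{S}^{\mathbf{I}_h}=s^{I_h}$, the post-inspection trajectory $(\mathbf{S}^{I_h+j},\tau^{I_h+j})_{j\ge 0}$ is independent of $\mathcal{H}_h$ (up to the current label) and has the law of the Markov renewal process restarted from $s^{I_h}$. This is exactly the step that collapses the full history down to the single current label.

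Next I would observe that the triple $(\mathbf{I}_{h+1}-\mathbf{I}_h,\ \mathbf{S}^{\mathbf{I}_{h+1}},\ \mathbf{T}^{\mathbf{I}_{h+1}}-\mathbf{T}^{\mathbf{I}_h})$ is a measurable functional of this post-$\mathbf{I}_h$ trajectory together with the action $a^h$. Here \emph{stationarity} of $\sigma$ is essential: since $a^h=\sigma(\mathbf{S}^{\mathbf{I}_h})$ is a deterministic function of the current label alone, it introduces no dependence on $\mathcal{H}_h$. The gap and next-label marginal is governed by the switching kernel $\bar{\kappa}^{\mathbf{I}_{h+1}-\mathbf{I}_h,\,a^h}_{SW}(\,\cdot\mid s^{I_h})$, which is a genuine probability measure over $(\Delta,s')$ by \eqref{eq:prob measureAM} and depends only on $s^{I_h}$ and $a^h=\sigma(s^{I_h})$; the elapsed time $\mathbf{T}^{\mathbf{I}_{h+1}}-\mathbf{T}^{\mathbf{I}_h}=\sum_{j=\mathbf{I}_h}^{\mathbf{I}_{h+1}-1}\tau^j$ accumulates the inter-arrival times of the skipped stages, whose joint law along the restarted trajectory is fixed by $\kappa_{CL}$ and $z$. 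Since all three quantities are functionals of a trajectory whose law depends on $s^{I_h}$ only, their joint conditional law depends on $s^{I_h}$ only, which is the displayed identity.

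The step I expect to be the main obstacle is the coupling between the \emph{random, unbounded} stage gap $\mathbf{I}_{h+1}-\mathbf{I}_h$ and the accumulated inter-inspection time: the operator's switch to a de-emphasized alert is itself driven by the realized labels of the intervening alerts, so the selection event $\{\mathbf{I}_{h+1}=\mathbf{I}_h+\Delta\}$ is not independent of $\sum\tau^j$, and one cannot simply multiply a gap distribution by an independent holding-time distribution. I would resolve this by never separating them: working with the single post-$\mathbf{I}_h$ trajectory, both the switching decision and the accumulated time are computed on the \emph{same} realized path, so the coupling is retained automatically and the only thing that must be checked is that this path's law is pinned down by $s^{I_h}$, which is precisely the strong Markov property. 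I would also flag explicitly that stationarity cannot be relaxed here: were $\sigma$ to depend on earlier inspected labels or times, the action $a^h$ would reintroduce dependence on $\mathcal{H}_h$ and the forward trajectory's law would cease to be a function of $s^{I_h}$ alone, invalidating the reduction.
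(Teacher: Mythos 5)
Your proof is correct and reaches the same conclusion as the paper's, but by a genuinely different route. The paper's proof conditions on the stage gap $l=\mathbf{I}_{h+1}-\mathbf{I}_h$ and argues term by term that $\Pr(\mathbf{I}_{h+1}=I_h+l)$ and the label transition each depend only on $s^{I_h}$ (and $s^{I_{h+1}}$), because all intervening quantities --- the AITNs $d$, the switching kernels $\kappa_{SW}^{l'}$ and $\bar{\kappa}_{SW}^{l'}$, and the inter-arrival times $\tau^{l'}$ --- are stochastically tied back to $s^{I_h}$ through $o$, $\kappa_{AT}$, and $\kappa_{CL}$; it then repeats the same decomposition for the inter-inspection time. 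You instead invoke the strong Markov property of the underlying alert renewal process at the stopping time $\mathbf{I}_h$ and treat the triple (gap, next label, elapsed time) as a single measurable functional of the restarted trajectory plus the action $\sigma(\mathbf{S}^{\mathbf{I}_h})$. Your route buys two things the paper's sketch leaves implicit: it handles the joint conditional law in one step rather than proving the label transition and the holding-time distribution separately, and it explicitly retains the coupling between the random gap and the accumulated inter-arrival times, whereas the paper's displayed sum is written as a product of a gap probability and a transition probability, which papers over exactly that dependence. Both arguments lean on the same two unproven inputs --- that the label sequence $(\mathcal{S},\mathcal{K}_{CL},\mathcal{Z})$ is itself a Markov renewal process (asserted at the end of Section \ref{sec:IDoS attack model}, even though the switching and completion events actually depend on the hidden $\theta^k,\phi^k$ through $d$), and that stationarity of $\sigma$ prevents the action from reintroducing history dependence --- so your proof is no less rigorous than the paper's on those points, and you are right to flag stationarity as the place where the reduction would break.
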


\begin{proof}
The sketch of the proof includes two steps. 
First, we prove that the state transition from $s^{I_h}$ to $s^{I_{h+1}}$ is Markov for all $h\in\mathbb{Z}^{0+}$. 
Due to the uncertainty of switching in inspection, the transition stage $\mathbf{I}_{h+1}$ is also a random variable for all $h\in \mathbb{Z}^{0+}$, and we can represent the transition probability as
\begin{equation*}
        \resizebox{.99\hsize}{!}{$
         \Pr(\mathbf{S}^{\mathbf{I}_{h+1}}=s^{I_{h+1}}|s^{I_{h}}) = \sum_{l=1}^{\infty}  \Pr(\mathbf{I}_{h+1}=I_h+l) \cdot \Pr(\mathbf{S}^{\mathbf{I}_{h+1}}=s^{I_{h+1}}|s^{I_{h}}), 
            $}
\end{equation*}
where $\Pr(\mathbf{I}_{h+1}=I_h+l)$ is the probability that the $(h+1)$-th inspection happens at attack stage $I_h+l$.  
The term $\Pr(\mathbf{S}^{\mathbf{I}_{h+1}}=s^{I_{h+1}}|s^{I_{h}})$  is Markov and can be computed based on $\kappa_{CL}$. 
The term $\Pr(\mathbf{I}_{h+1}=I_h+l)$ depends on $d(y_{EL},s^{I_h+l'},\theta^{I_h+l'}, \phi^{I_h+l'})$,  $\kappa_{SW}^{l'}$, $\bar{\kappa}_{SW}^{l'}$, $\tau^{l'}$, for all $l'\in \{1,\cdots,l\}$.  
Since $s^{I_h+l'},\theta^{I_h+l'},\phi^{I_h+l'}, l'\in \{1,\cdots,l\}$, are all stochastically related to $s^{I_h}$ and $s^{I_{h+1}}$ based on $o$, $\kappa_{AT}$ and $\kappa_{CL}$, the term  $\Pr(\mathbf{I}_{h+1}=I_h+l)$ depends on $s^{I_h}$ and $s^{I_{h+1}}$ for all $l\in \mathbb{Z}^+$. 

Then, we show that the distribution of the inter-arrival time $\Tau_{IN}^{\mathbf{I}_h,m}:=\mathbf{T}^{\mathbf{I}_{h+1}}-\mathbf{T}^{{I}_h}$ only depends on $s^{I_h}$ and $s^{I_{h+1}}$. 
Analogously, the cumulative distribution function of $\Tau_{IN}^{\mathbf{I}_h,m}$ is
\begin{equation*}
    \begin{split}
        \Pr(\Tau_{IN}^{\mathbf{I}_h,m}\leq t)=\sum_{l=1}^{\infty}  \Pr(\mathbf{I}_{h+1}=I_h+l) \cdot \Pr(\Tau_{IN}^{I_h,m}\leq t),  
    \end{split}
\end{equation*}
and hence we arrive at the Markov property. 
\end{proof}

\subsection{Stage Cost and Expected Cumulative Cost}
\label{subsec:Stage Cost and Expected Cumulative Cost}
For each alert at attack stage $k\in \mathbb{Z}^{0+}$, RADAMS assigns a stage cost \textcolor{black}{$\bar{c}(w^{k},s^{k})\in \mathbb{R}$} \textcolor{black}{to evaluate the outcomes of alert response $w^k\in \mathcal{W}$ under the category label $s^{k}\in \mathcal{S}$. 
The value of the cost varies under different scenarios. In this work, we can estimate it using} the salary of SOC analysts and the estimated loss of the associated attack. 
For example, $\bar{c}(w_{UN},s^{I_h})$ and $\bar{c}(w_{NI},s^{I_h})$ are positive costs as 
those alerts without a complete response incur additional workloads. The delayed inspections also expose the organization to the threats of time-sensitive attacks. 
On the other hand, $\bar{c}(w_{FE},s^{I_h})$ and $\bar{c}(w_{RE},s^{I_h})$ are negative costs because the alerts with complete alert response $w_{FE}$ and $w_{RE}$ reduce the workload of tier 2 SOC analysts and enable them to obtain threat intelligence. 

When the operator starts a new inspection at inspection stage $h+1$, RADAMS will evaluate the effectiveness of the AM strategy for the $h$-th inspection. 
The performance evaluation is reflected by the Expected Consolidated Cost (ECoC) ${c}: \mathcal{S}\times \mathcal{A} \mapsto \mathbb{R}$ at each inspection stage $h\in \mathbb{Z}^{0+}$. 
We denote the realization of ${c}(s^{I_h},a^h)$ as the Consolidated Cost (CoC) $\Tilde{c}^{I_h}(s^{I_h}, a^h)$. 
Since the AM strategy $\sigma$ at each inspection stage can affect the future human inspection process and the alert responses, we define the Expected Cumulative Cost (ECuC) $u(s^{I_h}, \sigma):=\sum_{h=0}^{\infty} \gamma^h c(s^{I_h},\sigma(s^{I_h}))$ under adaptive strategy $\sigma\in \Sigma$ as the long-term performance measure. 
The goal of the assistive technology is to design the optimal adaptive strategy $\sigma^*\in \Sigma$ that minimizes the ECuC $u$ under the presented IDoS attack based on the category label $s^{I_h}\in \mathcal{S}$ at each inspection stage $h$.  
We define $v^*(s^{I_h}):=\min_{\sigma\in \Sigma} u(s^{I_h}, \sigma)$ as the optimal ECuC when the category label is $s^{I_h}\in \mathcal{S}$.  
We refer to the \textit{default AM strategy} $\sigma^0\in \Sigma$ as the one when no AM action is applied under all category labels, i.e., $\sigma^0(s^{I_h})=a_0, \forall s^{I_h}\in \mathcal{S}$.  


\subsection{Reinforcement Learning}
\label{subsec:Reinforcement Learning}
Due to the absence of the following exact model parameters, RADAMS has to learn the optimal AM strategy $\sigma^*\in \Sigma$ based on the operator's alert responses in real time. 
\begin{itemize}
    \item Parameters of the IDoS attack model (e.g., $\kappa_{AT}$ and $z$) and the alert generation model (e.g., $o$) in Section \ref{sec:IDoS attack model}. 
    \item Parameters of the human attention model (e.g., $f_{LOE}$ and $f_{Sl}$),  inspection model (e.g., $\kappa_{SW}^{\Delta k}$, $\bar{\kappa}^{I_{h+1}-I_h, a^h}_{SW}$, and $d$), and alert response model (e.g., $y_{EL}$ and $p_{SP}$) in Section \ref{sec:human model}. 
\end{itemize}

Define $Q^h(s^{I_h},a^h)$ as the estimated ECuC during the $h$-th inspection when the category label is $s^{I_h}\in \mathcal{S}$ and the AM action is $a^h$.  
Based on Proposition \ref{prop:Markov}, the state transition is Markov, which enables Q-learning as follows. 
\begin{equation}
\label{eq:Qlearning}
\begin{split}
& Q^{h+1}(s^{I_{h}},a^h):=(1-\alpha^h(s^{I_h},a^h))Q^{h}(s^{I_h},a^h) \\
& \quad 
+ \alpha^h(s^{I_h},a^h)[\Tilde{c}^{I_h}(s^{I_h}, a^h)
+\gamma \min_{a'\in \mathcal{A}} Q^h({s}^{I_{h+1}},a')], 
\end{split}
\end{equation}
where $s^{I_h}$ and  $s^{I_{h+1}}$ are the observed category labels of the alerts at the attack stage $I_h$ and $I_{h+1}$, respectively. 
When the learning rate  $\alpha^h(s^{I_h},a^h)\in (0,1)$ satisfies $\sum_{h=0}^\infty \alpha^h(s^{I_h},a^h)=\infty, \sum_{h=0}^\infty (\alpha^h(s^{I_h},a^h))^2<\infty, \forall s^{I_h}\in \mathcal{S}, \forall a^h\in \mathcal{A}$, and all state-action pairs are explored infinitely, $\min_{a'\in \mathcal{A}} \allowbreak
Q^h(s^{I_h},a')$ converges to the optimal ECuC $v^*(s^{I_h})$ with probability $1$ as $h\rightarrow \infty$. 
At each inspection stage $h\in \mathbb{Z}^{0+}$, RADAMS selects AM strategy $a^h\in \mathcal{A}$ based on the $\epsilon$-greedy policy; i.e., RADAMS chooses a random action with a small probability $\epsilon\in [0,1]$, and the optimal action $arg\min_{a'\in \mathcal{A}} Q^h(s^{I_h},a')$ with probability $1-\epsilon$. 

We present the algorithm to learn the adaptive AM strategy based on the operator's real-time alert monitoring and inspection process in Algorithm \ref{algorithm:human simulation}. 
\begin{algorithm}
\small
 \caption{ \label{algorithm:human simulation} Algorithm to Learn the Adaptive AM strategy based on the Operator's Real-Time Alert Inspection}
\SetAlgoLined
\textbf{Input} $K$: The total number of attack stages\; 
 \textbf{Initialize} 
 The operator starts the $h$-th inspection under AM action $a^h\in \mathcal{A}$; $I_h=k_0$; $\Tilde{c}^{I_h}(s^{I_h}, a^h)=0$\;
   \For{$k  \leftarrow k_0+1$ \KwTo  $K$}  
   {
    \eIf{The operator has finished the $I_h$-th alert (i.e., $\text{EIT}>\text{AITN}$),} 
    {
    \eIf{Capable (i.e., $\textrm{rand}\leq p_{SP}(y_{EL},s^k,\theta^k, \phi^k)$)}{
     Dismiss (i.e., $w^{I_h}=w_{FE}$) or escalate (i.e., $w^{I_h}=w_{RE}$) the $I_h$-th alert\;}
     {
     Queue up the $I_h$-th alert, i.e., $w^{I_h}=w_{UN}$\;}
     $\Tilde{c}^{I_h}(s^{I_h}, a^h)= \Tilde{c}^{I_h}(s^{I_h}, a^h) + \bar{c}(w^{I_h},s^{I_h})$\;
     $I_{h+1}  \leftarrow  k$; The operator starts to inspect the $k$-th alert with category label $s^{I_{h+1}}$\; 
        Update $Q^{h+1}(s^{I_{h}},a^h)$ via \eqref{eq:Qlearning} and obtain the AM action  $a^{h+1}$ by $\epsilon$-greedy policy\; 
    $\Tilde{c}^{h+1}(s^{I_{h+1}},a^{h+1})=0$;  $h \leftarrow h+1$\;
    }
    {
    \eIf{The operator chooses to switch \textbf{or} The MAD is reached, i.e.,  \textcolor{black}{$t^k-t^{I_h}\geq D_{max}(s^{I_h})$}
    }
    {
    Queue up the $I_h$-th alert (i.e., $w^{I_h}=w_{UN}$)\;
    $\Tilde{c}^{I_h}(s^{I_h}, a^h)= \Tilde{c}^{I_h}(s^{I_h}, a^h) + \bar{c}(w_{UN},s^{I_h})$\;
   $I_{h+1}  \leftarrow  k$; The operator starts to inspect the $k$-th alert with category label $s^{I_{h+1}}$\; 
        Update $Q^{h+1}(s^{I_{h}},a^h)$ via \eqref{eq:Qlearning} and obtain the AM action  $a^{h+1}$ by $\epsilon$-greedy policy\; 
    $\Tilde{c}^{h+1}(s^{I_{h+1}},a^{h+1})=0$;  $h \leftarrow h+1$\;
    }
    {The operator continues the inspection of the $I_h$-th alert with decreased LOE\;
    The $k$-th alert is queued up for delayed inspection (i.e., $w^k=w_{NI}$)\; 
     $\Tilde{c}^{I_h}(s^{I_h}, a^h)=\Tilde{c}^{I_h}(s^{I_h}, a^h)+\bar{c}(w_{NI},s^{k})$\; 
    }
    }
  }
 \textbf{Return}  $Q^h(s,a), \forall s\in \mathcal{S},a\in \mathcal{A}$\; 
\end{algorithm}
Each simulation run corresponds to the operator's work shift of $24$ hours at the SOC. 
Since the SOC can receive over $10$ thousand of alerts in each work shift, we can use infinite horizon to approximate the total number of attack stages $K>10,000$. 
Whenever the operator starts to inspect a new alert at inspection stage $I_{h+1}$, RADAMS applies Q-learning in \eqref{eq:Qlearning} based on the category label $s^{I_{h+1}}$ of the newly arrived alert and determines the AM action $a^{h+1}$ for the $h+1$ inspection based on the $\epsilon$-greedy policy as shown in lines $12$ and $19$ of Algorithm \ref{algorithm:human simulation}.  
The CoC $\Tilde{c}^{I_h}(s^{I_h}, a^h)$ of the $h$-th inspection under the AM action $a^h\in\mathcal{A}$ and the category label $s^{I_h}$ of the inspected alert can be computed iteratively based on the stage cost $\bar{c}(w^k,s^k)$ of the alerts during the attack stage $k\in \{I_h,\cdots,I_{h+1}-1\}$, as shown in lines $13$, $20$, and $24$ of Algorithm \ref{algorithm:human simulation}.

\section{Theoretical Analysis}
\label{sec:Theoretical Analysis}

\begin{table}[h]
\centering
\caption{Summary of Notations in Section \ref{sec:Theoretical Analysis}
\label{table:notation4}}
\textcolor{black}{
\begin{tabularx}{\columnwidth}{l X} 
     \hline
\textbf{Variable} &  \textbf{Meaning} \\ \hline
$p_{UN}(s^{I_h},a^h)$ &  Attentional Deficiency Level (ADL). \\
$\beta>0$ & Poisson arrival rate.\\
$\bar{z}$ & PDF of Erlang distribution with shape $m+1$ and rate $\beta$.\\
$p^h_{SD}(w^{I_h} | s^{I_h},a^h;\theta^{I_h},\phi^{I_h})$ & Probability that the operator makes alert response $w^{I_h}$ at inspection stage $h$. \\
$\lambda(s^{I_h},m,\phi^{I_h})$ & Expected reward of a complete alert response.\\ 
\hline
\end{tabularx}
}
\end{table}
In Section \ref{sec:Theoretical Analysis}, we focus on the class of ambitious operators who attempt to inspect all alerts, i.e., $\kappa_{SW}(s^{k+\Delta k}|s^k)= \mathbf{1}_{\{\Delta k= 1\}}, \forall s^k,s^{k+\Delta k}\in\mathcal{S}, \forall \Delta k\in\mathbb{Z}^{+}$. 
To assist this class of operators, the implemented AM action $a_m, m\in \{0,1,\cdots,M\}$, chooses to make the selected alerts fully unnoticeable. 
Then, under $a_m\in \mathcal{A}$, the operator at inspection stage $h$ can pay sustained attention to inspect the alert of category label $s^{I_h}\in \mathcal{S}$ for $m+1$ attack stages. 
Moreover, the operator switches to the new alert at attack stage $I_{h+1}$, i.e., 
$\sum_{s^{I_{h}+m+1} \in \mathcal{S}} \bar{\kappa}^{I_{h+1}-I_h, a_m}_{SW}(s^{I_{h}+m+1}|s^{I_h})=\mathbf{1}_{\{I_{h+1}-I_h=m+1 \}}$. 
Throughout the section, we omit the variable of the expertise level $y_{EL}$ in functions $d,\bar{d},p_{SP}$, and $p_{CG}$ because $y_{EL}$ is a constant for all attack stages. 

\subsection{Security Metrics}
\label{subsec:Security Metrics}
We propose two security metrics in Definition \ref{def:securitymetric} to evaluate the performance of ambitious operators under IDoS attacks and different AM strategies. 
The first metric, denoted as $p_{UN}(s^{I_h},a^h)$, is the probability that the operator chooses $w_{UN}$ during the $h$-th inspection under the category label $s^{I_h}\in\mathcal{S}$ and AM action $a^h\in \mathcal{A}$. 
This metric reflects the Attentional Deficiency Level (ADL) of the IDoS attack. 
For example, as the attackers generate more feints at a higher frequency, the operator is persistently distracted by the new alerts, and it becomes unlikely for him to fully respond to an alert. The ADL $p_{UN}(s^{I_h},a^h)$ is high in this scenario.  
We use the ECuC $u(s^{I_h}, \sigma)$ as the second metric that evaluates the \textit{IDoS risk} under the category label $s^{I_h}\in\mathcal{S}$ and the AM strategy $\sigma\in \Sigma$. 
For both metrics, smaller values are preferred. 

\begin{definition}[\textbf{Attentional Deficiency Level and Risk}]
\label{def:securitymetric}
Under category label $s^{I_h}\in \mathcal{S}$ and the stationary AM strategy $\sigma\in \Sigma$, we define $p_{UN}(s^{I_h},\sigma(s^{I_h}))$ and $u(s^{I_h}, \sigma)$ as the Attentional Deficiency Level (ADL) and the risk of the IDoS attacks defined in Section \ref{sec:IDoS attack model}, respectively. 
\end{definition}

\subsection{Closed-Form Computations} 
\label{subsec:Closed-form computation}
The Markov renewal process that characterizes the IDoS attack or the associated alert sequence follows a Poisson process when Condition \ref{condition:poissonProcess} holds. 
\begin{condition}[\textbf{Poisson Arrival}]
\label{condition:poissonProcess}
The inter-arrival \textcolor{black}{times $\tau^k, \forall k\in \mathbb{Z}^{0+}$,   are independent and} exponentially distributed random variables with the same arrival rate denoted by $\beta>0$, i.e., $z(\tau |\theta^{k+1},\phi^{k+1},\theta^{k},\phi^{k})= \beta e^{{-\beta \tau}}, \tau\in [0,\infty)$ for all $\theta^{k+1},\theta^{k}\in \Theta$ and  $\phi^{k+1},\phi^{k}\in \Phi$. 
\end{condition}

Recall that random variable $\mathbf{T}_{IN}^{I_h,m}$ represents the inspection time of the $I_h$-th alert under the AM action $a^h=a_m\in \mathcal{A}$. 
For the ambitious operators under AM action $a_m\in \mathcal{A}$ at inspection stage $h$, the next inspection happens at attack stage $I_{h+1}=I_h+m+1$. Thus, $I_{h+1}$ is no longer a random variable. 
As a summation of $m+1$ i.i.d. exponential distributed random variables of rate $\beta$, $\mathbf{T}_{IN}^{I_h,m}$ follows an \textit{Erlang distribution} denoted by \textcolor{black}{PDF function} $\bar{z}$ with shape $m+1$ and and rate $\beta>0$ when condition \ref{condition:poissonProcess} holds, i.e., $ \bar{z}(\tau)=\frac {\beta^{m+1} \tau^{m}e^{-\beta \tau}}{m!}, \tau\in [0,\infty)$. 

Denote $p^h_{SD}(w^{I_h} | s^{I_h},a^h;\theta^{I_h},\phi^{I_h})$ as the probability that the operator makes alert response $w^{I_h}$ at inspection stage $h$. 
To obtain a theoretical underpinning, we consider the case where the AITN equals the average inspection time, i.e., $d(s^k,\theta^k, \phi^k)=\bar{d}(s^k,\theta^k, \phi^k)$.  
Then, the operator under AM action $a_m$ makes a complete alert response (i.e., $w^{I_h}\in \{w_{FE},w_{RE}\}$) at inspection stage $h$ for category label $s^{I_h}$ if the inspection time $\tau_{IN}^{I_h,m}$ is greater than the AITN. 
The probability of the above event can be represented as 
$\int_{d(s^{I_h},\theta^{I_h}, \phi^{I_h})}^{\infty} p_{SP}(s^{I_h},\theta^{I_h}, \phi^{I_h}) \bar{z}(\tau) d\tau =p_{SP}(s^{I_h},\theta^{I_h}, \phi^{I_h})    \cdot  \sum_{n=0}^{m} \frac{1}{n!} \allowbreak e^{-\beta d(s^{I_h},\theta^{I_h}, \phi^{I_h}) }  (\beta d(s^{I_h},\theta^{I_h}, \phi^{I_h}) )^n$, which leads to 
\begin{equation}
\label{eq:probofUN}
   \begin{split}
        & p_{SD}^h (w_{UN} | s^{I_h},a_m;{\theta}^{I_h}, \phi^{I_h})
        =1- p_{SP}(s^{I_h},\theta^{I_h}, \phi^{I_h})   \\
        &  \quad\quad \quad
        \cdot \sum_{n=0}^{m} \frac{1}{n!} e^{-\beta d(s^{I_h},\theta^{I_h}, \phi^{I_h}) } (\beta d(s^{I_h},\theta^{I_h}, \phi^{I_h}) )^n. 
   \end{split}
\end{equation}
Then, the ADL $ p_{UN}(s^{I_h},a^h)$ can be computed as 
\begin{equation}
\label{eq:severityLevelcloseform}
\begin{split}
       \sum_{\theta^{I_h}\in \Theta,\phi^{I_h}\in \Phi} \Pr({\theta}^{I_h}, \phi^{I_h}|s^{I_h})    \cdot p_{SD}^h(w_{UN} | s^{I_h},a^h;{\theta}^{I_h}, \phi^{I_h}) , 
\end{split}
\end{equation}
where the conditional probability $\Pr({\theta}^{I_h}, \phi^{I_h}|s^{I_h})$ can be computed via the Bayesian rule, i.e., 
$\Pr({\theta}^{I_h}, \phi^{I_h}|s^{I_h})= \frac{ o(s^{I_h}|{\theta}^{I_h}, \phi^{I_h}) b({\theta}^{I_h}, \phi^{I_h}) }{ \sum_{\theta^{I_h}\in \Theta,\phi^{I_h}\in \Phi} o(s^{I_h}|{\theta}^{I_h}, \phi^{I_h}) b({\theta}^{I_h}, \phi^{I_h}) }$.

We can compute the ECoC $c(s^{I_h},a_m)$ explicitly as 
\begin{equation}
\label{eq:consolidated cost}
\begin{split}
        &    c(s^{I_h},a_m)=m \bar{c}(w_{NI},s^{I_h})    + \sum_{\theta^{I_h}\in \Theta,\phi^{I_h}\in \Phi} \Pr({\theta}^{I_h}, \phi^{I_h}|s^{I_h}) \\
        & \quad \quad \quad \quad \quad 
        \cdot \sum_{w^{I_h}\in \mathcal{W}}  p_{SD}^h(w^{I_h}|s^{I_h},a_m;\theta^{I_h},\phi^{I_h})  \bar{c}(w^{I_h},s^{I_h}).  
\end{split}
\end{equation}
For prudent operators in Section \ref{subsec:Alert Responses}, we have 
\begin{equation}
\label{eq:alwayscorrect}
    \begin{split}
    \resizebox{0.88 \hsize}{!}{$
        p_{SD}^h(w_{i} | s^{I_h},a^h;\theta_{i},\phi^{I_h})=1-p_{SD}^h(w_{UN} | s^{I_h},a^h;\theta_{i},\phi^{I_h}),
        $}
    \end{split}
\end{equation} 
for all  $i\in \{FE,RE\}, s^{I_h}\in \mathcal{S}, a^h\in \mathcal{A}, \phi^{I_h}\in \Phi, h\in \mathbb{Z}^{0+}$. 
Plugging \eqref{eq:alwayscorrect} into \eqref{eq:consolidated cost}, we can simplify the ECoC $c(s^{I_h},a_m)$ as 
\begin{equation}
     \label{eq:SimpliedConCost}
            \resizebox{0.91 \hsize}{!}{$
     \begin{split}
         & c(s^{I_h},a_m) = \sum_{\phi^{I_h}\in \Phi}  \sum_{i\in \{FE,RE\}}   \Pr(\theta_{i}, \phi^{I_h}|s^{I_h})  \cdot p_{SD}^h(w_{i} | s^{I_h},a_m;\theta_{i},\phi^{I_h}) \\
        & \quad \quad\quad \quad \cdot [\bar{c}(w_{i},s^{I_h})- \bar{c}(w_{UN},s^{I_h}) ] +  m\bar{c}(w_{NI},s^{I_h})+\bar{c}(w_{UN},s^{I_h}). 
     \end{split}
             $}
\end{equation}
As shown in Proposition \ref{prop:exp_product}, the ADL and the risk are monotone function of $\beta d(s^{I_h},\theta^{I_h}, \phi^{I_h})$ for each AM strategy.  
\begin{proposition}
\label{prop:exp_product}
If condition \ref{condition:poissonProcess} holds, then the ADL $p_{UN}(s^{I_h},\sigma(s^{I_h}))$ and the risk $u(s^{I_h},\sigma)$ of an IDoS attack under category label $s^{I_h}\in \mathcal{S}$ and AM strategy $\sigma\in\Sigma$ increase in the value of the product $\beta d(s^{I_h},\theta^{I_h}, \phi^{I_h})$. 
\end{proposition}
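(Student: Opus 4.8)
The plan is to reduce everything to a single-variable monotonicity fact about the truncated Poisson sum and then propagate it through the Bayesian averaging and the cumulative-cost structure. Write $x := \beta d(s^{I_h},\theta^{I_h},\phi^{I_h})$ for the product in question and set
\begin{equation*}
  g_m(x) := \sum_{n=0}^{m} \frac{x^n}{n!}\, e^{-x},
\end{equation*}
which is exactly the factor multiplying $p_{SP}$ in \eqref{eq:probofUN}. The first step is to differentiate $g_m$ in $x$; the sum telescopes,
\begin{equation*}
  g_m'(x) = e^{-x}\Big[\sum_{n=0}^{m-1}\frac{x^n}{n!} - \sum_{n=0}^{m}\frac{x^n}{n!}\Big] = -e^{-x}\frac{x^m}{m!} \le 0,
\end{equation*}
so $g_m$ is (strictly, for $x>0$) decreasing in the product $\beta d$ for every fixed shape $m$.

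The ADL then follows immediately. By \eqref{eq:probofUN} the per-type probability $p_{SD}^h(w_{UN}\mid s^{I_h},a_m;\theta^{I_h},\phi^{I_h}) = 1 - p_{SP}\, g_m(x)$ is increasing in $x$, since $p_{SP}\ge 0$ and $g_m$ is decreasing. Because the ADL in \eqref{eq:severityLevelcloseform} is a convex combination of these terms with the Bayesian weights $\Pr(\theta^{I_h},\phi^{I_h}\mid s^{I_h})\ge 0$ that do not depend on $\beta d$, the ADL inherits the monotonicity componentwise in each $(\theta^{I_h},\phi^{I_h})$, hence for a simultaneous increase of all products (e.g.\ an increase of the global rate $\beta$).

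For the risk I would first show that the stage cost $c(s^{I_h},a_m)$ is increasing in $x$ and then lift this to the discounted sum. Using \eqref{eq:alwayscorrect}, $p_{SD}^h(w_i\mid\cdots)=p_{SP}\, g_m(x)$ for $i\in\{FE,RE\}$, which is decreasing in $x$. The crucial sign input is the cost ordering stated in Section \ref{subsec:Stage Cost and Expected Cumulative Cost}: $\bar c(w_{FE},\cdot),\bar c(w_{RE},\cdot)<0$ while $\bar c(w_{UN},\cdot)>0$, so each bracket $\bar c(w_i,s^{I_h})-\bar c(w_{UN},s^{I_h})<0$ in \eqref{eq:SimpliedConCost}. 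A decreasing nonnegative probability times a negative constant is increasing in $x$, and the remaining terms $m\bar c(w_{NI},s^{I_h})+\bar c(w_{UN},s^{I_h})$ are constant in $x$; summing over $\phi^{I_h}$ and $i$ with nonnegative weights shows $c(s^{I_h},a_m)$ is increasing in $x$. Finally, since for ambitious operators the $(m+1)$-step transition of the inspected category label is governed by $\kappa_{CL}$ and the discount $\gamma^h$, neither of which depends on $\beta d$, the risk $u(s^{I_h},\sigma)=\sum_{h}\gamma^h c(s^{I_h},\sigma(s^{I_h}))$ is a nonnegative combination of stage costs each increasing in $\beta d$, and is therefore itself increasing.

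The main obstacle is conceptual rather than computational: making precise what it means to be monotone in $\beta d(s^{I_h},\theta^{I_h},\phi^{I_h})$ when $d$ depends on the hidden pair $(\theta^{I_h},\phi^{I_h})$ that is averaged out. I would resolve this by proving monotonicity componentwise for each fixed $(\theta^{I_h},\phi^{I_h})$ and observing that both the ADL and the stage cost are weighted averages with $\beta d$-independent, nonnegative weights, so a uniform increase of the products (equivalently, an increase in $\beta$) raises each component and hence the aggregate. The only other point requiring care is the explicit invocation of the cost signs, without which the direction of monotonicity for the risk could flip.
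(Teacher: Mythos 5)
Your proof is correct and follows essentially the same route as the paper's: monotonicity of the truncated Poisson factor in $\beta d$, propagation through the $\beta d$-independent Bayesian weights for the ADL, the sign of $\bar c(w_i,s^{I_h})-\bar c(w_{UN},s^{I_h})$ for the ECoC, and the $\beta d$-independence of the transition structure (equivalently, the monotone Bellman operator the paper cites) for the risk. If anything you are more careful than the paper: you verify the key monotonicity explicitly via the telescoping derivative $g_m'(x)=-e^{-x}x^m/m!$ rather than asserting it, and your conclusion that the ECoC \emph{increases} in the product is the one consistent with the proposition's statement, whereas the paper's proof text says the ECoC and the risk ``decrease'' with the product --- an apparent sign typo that your argument silently corrects.
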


\begin{proof}
First, since $p_{SD}^h(w_{UN})$ in \eqref{eq:probofUN} increases monotonously with respect to the product $\beta d(s^{I_h},\theta^{I_h}, \phi^{I_h})$, the values of $p_{SD}^h(w_{FE})$ and $p_{SD}^h(w_{RE})$ in \eqref{eq:alwayscorrect} decrease monotonously with respect to the product. 
Plugging \eqref{eq:probofUN} into \eqref{eq:severityLevelcloseform}, we obtain that $p_{UN}(s^{I_h},a_m)$ in \eqref{eq:severityLevelFinalform} under any $a_m\in \mathcal{A}$ and $s^{I_h}\in \mathcal{S}$ is a summation of functions increasing in $\beta d(s^{I_h},\theta^{I_h}, \phi^{I_h})$. 
\begin{equation}
\label{eq:severityLevelFinalform}
\begin{split}
    & p_{UN}(s^{I_h},a_m) =    \sum_{\phi^{I_h}\in \Phi} \sum_{i\in \{FE,RE\}} 
    \Pr({\theta}_{i}, \phi^{I_h}|s^{I_h}) [ 1 - \\
        &  
         p_{SP}(s^{I_h},\theta_{i}, \phi^{I_h})  
        \cdot \sum_{n=0}^{m} \frac{1}{n!} e^{-\beta d(s^{I_h},\theta_{i}, \phi^{I_h}) } (\beta d(s^{I_h},\theta_{i}, \phi^{I_h}) )^n ]. 
\end{split}
\end{equation}

Second, since $\bar{c}(w_{FE},s^{I_h})$ and $\bar{c}(w_{RE},s^{I_h})$ are negative, and $\bar{c}(w_{UN},s^{I_h})$ is positive, the ECoC in \eqref{eq:SimpliedConCost} decreases with $\beta d(s^{I_h},\theta^{I_h}, \phi^{I_h})$ under any $a_m\in \mathcal{A}$ and $s^{I_h}\in \mathcal{S}$. 
Then, the risk also decreases with the product, due to the monotonicity of the Bellman operator \cite{bertsekas1996neuro}.  
\end{proof}

\begin{remark}[\textbf{Product Principle of Attention (PPoA)}] 
 On the one hand, as $\beta$ increases, the feint and real attacks arrive at a higher frequency on average, resulting in a higher demand of attention resources from the human operator. 
 On the other hand, as $d(s^{I_h},\theta^{I_h}, \phi^{I_h})$ increases, the human operator requires a longer inspection time to determine the attack's type, leading to a lower supply of attention resources. 
 \textcolor{black}{Proposition \ref{prop:exp_product} characterizes the PPoA; i.e., for any stationary AM strategy $\sigma\in \Sigma$, the ADL and the risk of IDoS attacks depend on the product of the supply and demand of attention resources.} 
\end{remark}

\subsection{Fundamental Limits under AM strategies}
\label{subsec:Fundamental limits under AM strategies}
Section \ref{subsec:Fundamental limits under AM strategies} aims to show the fundamental limits of the IDoS attack's ADL, the ECoC, and the risk under different AM strategies. 
Define the shorthand notation: 
$\underline{p}(s^{I_h}):=\sum_{\phi^{I_h}\in \Phi} \sum_{i\in \{FE,RE\}} \Pr({\theta}_{i}, \phi^{I_h}|s^{I_h}) {p}_{CG}(s^{I_h},\theta_{i}, \phi^{I_h}) $. 

\begin{lemma}
\label{lemma:SLwithAM}
If Condition \ref{condition:poissonProcess} holds and $M\rightarrow \infty$, then for each $s^{I_h}\in \mathcal{S}$, the ADL $p_{UN}(s^{I_h},a_m)$ decreases strictly to $\underline{p}(s^{I_h})    $ as $m$ increases.  
\end{lemma}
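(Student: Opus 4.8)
The plan is to work directly from the closed-form expression for the ADL in \eqref{eq:severityLevelFinalform}, in which the parameter $m$ enters only through the truncated sum $g_m := \sum_{n=0}^{m} \frac{1}{n!} e^{-\beta d(s^{I_h},\theta_i,\phi^{I_h})}(\beta d(s^{I_h},\theta_i,\phi^{I_h}))^n$. The crucial observation is that $g_m$ is exactly the cumulative distribution function of a Poisson random variable with mean $\beta d(s^{I_h},\theta_i,\phi^{I_h})$ evaluated at $m$ (equivalently, a regularized incomplete-gamma quantity arising from the Erlang PDF $\bar{z}$). I would therefore recast the whole argument in terms of the monotonicity and the limit of $g_m$ as a function of $m$, since the outer weights $\Pr(\theta_i,\phi^{I_h}|s^{I_h})$ and the success probabilities $p_{SP}$ in \eqref{eq:severityLevelFinalform} do not depend on $m$.

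First I would establish strict monotonicity. Because $g_{m+1}-g_m = \frac{1}{(m+1)!}e^{-\beta d(s^{I_h},\theta_i,\phi^{I_h})}(\beta d(s^{I_h},\theta_i,\phi^{I_h}))^{m+1}$, and $\beta>0$ by Condition \ref{condition:poissonProcess} together with the positivity of the AITN $d(s^{I_h},\theta_i,\phi^{I_h})>0$, this increment is strictly positive. Hence each bracketed term $1 - p_{SP}(s^{I_h},\theta_i,\phi^{I_h})\,g_m$ in \eqref{eq:severityLevelFinalform} is strictly decreasing in $m$ whenever $p_{SP}>0$. Since \eqref{eq:severityLevelFinalform} writes $p_{UN}(s^{I_h},a_m)$ as a convex combination of these terms, with the nonnegative, $m$-independent weights $\Pr(\theta_i,\phi^{I_h}|s^{I_h})$ summing to one over $i\in\{FE,RE\}$ and $\phi^{I_h}\in\Phi$, the weighted sum is strictly decreasing in $m$.

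Next I would compute the limit. Because the Poisson probability masses sum to one, $g_m \to 1$ as $m\to\infty$; letting $M\to\infty$ so that $m$ may grow without bound, each bracketed term converges to $1 - p_{SP}(s^{I_h},\theta_i,\phi^{I_h}) = p_{CG}(s^{I_h},\theta_i,\phi^{I_h})$ by Definition \ref{def:capacity gap}. Taking the same convex combination and using the linearity of the finite sum over $i\in\{FE,RE\}$ and $\phi^{I_h}\in\Phi$ gives precisely $\underline{p}(s^{I_h})$. Combining the two parts yields that $p_{UN}(s^{I_h},a_m)$ decreases strictly to $\underline{p}(s^{I_h})$, as claimed.

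I expect the only real subtlety to be the bookkeeping for strictness: one must ensure that at least one summand carries positive weight $\Pr(\theta_i,\phi^{I_h}|s^{I_h})>0$ together with $p_{SP}(s^{I_h},\theta_i,\phi^{I_h})>0$, so that the strict inequality $g_{m+1}>g_m$ survives the weighting and is not collapsed to equality; this holds for any realizable category label $s^{I_h}$. The monotonicity and limiting value of $g_m$ are themselves standard facts about the Poisson (equivalently Erlang) tail, so no heavy machinery is required beyond \eqref{eq:severityLevelFinalform}.
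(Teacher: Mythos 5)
Your proof is correct and takes essentially the same route as the paper's: both argue that the truncated exponential (Poisson CDF) sum strictly increases in $m$ because each added term is positive, and that it converges to $1$, so the bracketed terms tend to $p_{CG}$ and the weighted sum to $\underline{p}(s^{I_h})$. Your extra remark about needing a positive weight and $p_{SP}>0$ for strictness to survive the averaging is a minor refinement the paper leaves implicit, not a different argument.
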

\begin{proof}
Since $\frac{1}{n!} e^{-\beta d(s^{I_h},\theta^{I_h}, \phi^{I_h})}) (\beta d(s^{I_h},\theta^{I_h}, \phi^{I_h}) )^n>0$ for all $ m\in \{0,\cdots, M\}$, the value of $p_{UN}(s^{I_h},a_m)$ in \eqref{eq:severityLevelFinalform} strictly decreases as $m$ increases. 
Moreover, since $\lim_{m\rightarrow \infty} \sum_{n=0}^{m} \frac{1}{n!} e^{-\beta d(s^{I_h},\theta^{I_h}, \phi^{I_h})}) (\beta d(s^{I_h},\theta^{I_h}, \phi^{I_h}) )^n=1$,  we have $\min_{m\in \{0,\cdots, M\}} p_{UN}(s^{I_h},a_m)=\underline{p}(s^{I_h})$ for all $s^{I_h}\in \mathcal{S}$. 
\end{proof}
\begin{remark}[\textbf{Fundamental Limit of ADL}]
Lemma \ref{lemma:SLwithAM} characterizes that the minimum ADL under all AM strategies $a_m\in \mathcal{A}$ is $\underline{p}(s^{I_h})$. The value of $\underline{p}(s^{I_h})$ depends on the operator's capacity gap  ${p}_{CG}(s^{I_h},\theta_{FE}, \phi^{I_h})$ and the frequency of feint and real attacks with different targets, i.e., $\Pr({\theta}^{I_h}, \phi^{I_h}|s^{I_h}), \forall \theta^{I_h}\in \Theta, \phi^{I_h}\in \Phi$.  
\end{remark}

Denote the expected reward of making a complete alert response (i.e., the rewards to dismiss feints and escalate real attacks) as 
\begin{equation*}
\begin{split}
          & \lambda(s^{I_h},m,\phi^{I_h}):= \sum_{i\in \{FE,RE\}} \bar{c}(w_{i},s^{I_h}) \cdot \Pr(\theta_{i}, \phi^{I_h}|s^{I_h})  \\
      & \cdot p_{SP}^h(s^{I_h},\theta_{i},\phi^{I_h}) \cdot [\sum_{n=0}^{m} \frac{1}{n!} e^{-\beta d(s^{I_h},\theta_{i}, \phi^{I_h}) } (\beta d(s^{I_h},\theta_{i}, \phi^{I_h}) )^n].  
\end{split}
\end{equation*}

Combining \eqref{eq:SimpliedConCost} and \eqref{eq:severityLevelFinalform}, we can rewrite ECoC as a combination of the following three terms in \eqref{eq:risk and severitylevel relation}. 
\begin{equation}
    \label{eq:risk and severitylevel relation}
    \begin{split}
       & c(s^{I_h},a_m)  = p_{UN}(s^{I_h},a_m) \bar{c}(w_{UN},s^{I_h})\\
      & \quad\quad   +m\bar{c}(w_{NI},s^{I_h})  + \sum_{\phi^{I_h}\in \Phi}  \lambda(s^{I_h},m,\phi^{I_h}). 
    \end{split}
\end{equation}
Based on Lemma \ref{lemma:SLwithAM}, the first term $p_{UN}(s^{I_h},a_m) \bar{c}(w_{UN},s^{I_h})$ and the third term $\sum_{\phi^{I_h}\in \Phi}  \lambda(s^{I_h},m,\phi^{I_h})$ decrease in $m$, while the second term $m\bar{c}(w_{NI},s^{I_h})$ in \eqref{eq:risk and severitylevel relation} increases in  $m$ linearly at the rate of $\bar{c}(w_{NI},s^{I_h})$. 
The tradeoff among the three terms is summarized below.

\begin{remark}[\textbf{Tradeoff among ADL, Reward of Alert Attention, and Impact for Alert Inattention}]
\label{remark:tradeoff of three}
Based on Lemma \ref{lemma:SLwithAM} and \eqref{eq:risk and severitylevel relation}, increasing $m$ reduces the ADL and achieves a higher reward of completing the alert response. 
However, the increase of $m$ also linearly increases the impact for alert inattention represented by $m \bar{c}(w_{NI},s^{I_h})$, the cost of uninspected alerts. Thus, we need to strike a balance among these terms to reduce the IDoS risk. 
\end{remark}

Define $\lambda_{min}(s^{I_h},\phi^{I_h}):= \sum_{i\in \{FE,RE\}} \bar{c}(w_{i},s^{I_h}) \allowbreak \Pr(\theta_{i}, \phi^{I_h}|s^{I_h}) \allowbreak p_{SP}^h(s^{I_h},\theta_{i},\phi^{I_h})  $,   
$\lambda^{\epsilon_0}_{max}(s^{I_h},\phi^{I_h}):= (1-\epsilon_0) \allowbreak \lambda_{min}(s^{I_h},\phi^{I_h})$, 
$c_{min}(s^{I_h}):=\sum_{\phi^{I_h}\in \Phi} \lambda_{min}(s^{I_h},\phi^{I_h})+ \underline{p}(s^{I_h}) \allowbreak \bar{c}(w_{UN},s^{I_h}) \allowbreak +m\bar{c}(w_{NI},s^{I_h}) $,  
and $c_{max}^{\epsilon_0}(s^{I_h}):= \sum_{\phi^{I_h}\in \Phi} \lambda^{\epsilon_0}_{max}(s^{I_h},\phi^{I_h}) + [\underline{p}(s^{I_h}) +  \epsilon_0(1-\underline{p}(s^{I_h})) ]\allowbreak \bar{c}(w_{UN},s^{I_h}) \allowbreak +m\bar{c}(w_{NI},s^{I_h})$. 

\begin{proposition}
\label{prop:ARiskwithAM}
Consider the scenario where Condition \ref{condition:poissonProcess} holds and $M>\underline{m}(s^{I_h})$. 
For any $\epsilon_0\in (0,1]$ and $s^{I_h}\in \mathcal{S}$, there exists $\underline{m}(s^{I_h})\in \mathbb{Z}^+$ such that 
$c(s^{I_h},a_m)\in [c_{min}(s^{I_h}), c^{\epsilon_0}_{max}(s^{I_h})], \forall a_m\in \mathcal{A}$, when $m\geq \underline{m}(s^{I_h})$. 
Moreover, the lower bound $c_{min}(s^{I_h})$ and the upper bound $c^{\epsilon_0}_{max}(s^{I_h})$ increase in $m$ linearly at the same rate $\bar{c}(w_{NI},s^{I_h})$. 
\end{proposition}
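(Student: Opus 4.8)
The plan is to treat the statement as a squeeze argument applied to the three-term decomposition of the ECoC in \eqref{eq:risk and severitylevel relation}, namely $c(s^{I_h},a_m) = p_{UN}(s^{I_h},a_m)\bar{c}(w_{UN},s^{I_h}) + m\bar{c}(w_{NI},s^{I_h}) + \sum_{\phi^{I_h}\in\Phi}\lambda(s^{I_h},m,\phi^{I_h})$. The only two $m$-dependent pieces I need to bound are the first and third terms; the middle term $m\bar{c}(w_{NI},s^{I_h})$ is common to $c$, $c_{min}$, and $c_{max}^{\epsilon_0}$, and therefore carries the linear growth. The engine of both bounds is the truncated Poisson sum $B_i(m):=\sum_{n=0}^{m}\frac{1}{n!}e^{-\beta d(s^{I_h},\theta_i,\phi^{I_h})}(\beta d(s^{I_h},\theta_i,\phi^{I_h}))^n$, which increases monotonically to $1$ as $m\to\infty$ (the fact already exploited in Lemma \ref{lemma:SLwithAM}) and satisfies $B_i(m)\leq 1$. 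Because $\Phi$ and $\{FE,RE\}$ are finite, for any $\epsilon_0\in(0,1]$ I would define $\underline{m}(s^{I_h})\in\mathbb{Z}^+$ as the smallest integer such that $B_i(m)\geq 1-\epsilon_0$ holds simultaneously for every $\phi^{I_h}\in\Phi$ and $i\in\{FE,RE\}$ whenever $m\geq\underline{m}(s^{I_h})$; the hypothesis $M>\underline{m}(s^{I_h})$ then guarantees such actions $a_m$ exist in $\mathcal{A}$.

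For the lower bound I would invoke Lemma \ref{lemma:SLwithAM} to obtain $p_{UN}(s^{I_h},a_m)\geq\underline{p}(s^{I_h})$, and observe that $\lambda(s^{I_h},m,\phi^{I_h})\geq\lambda_{min}(s^{I_h},\phi^{I_h})$: each coefficient $\bar{c}(w_i,s^{I_h})\Pr(\theta_i,\phi^{I_h}|s^{I_h})p_{SP}^h$ is nonpositive (the rewards $\bar{c}(w_{FE},s^{I_h}),\bar{c}(w_{RE},s^{I_h})$ are negative), so multiplying by $B_i(m)\leq 1$ raises the product toward zero. Since $\bar{c}(w_{UN},s^{I_h})>0$, these two inequalities combine termwise to give $c(s^{I_h},a_m)\geq c_{min}(s^{I_h})$ for all $a_m\in\mathcal{A}$, unconditionally on $m$.

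The upper bound is where the threshold $\underline{m}(s^{I_h})$ enters. For $m\geq\underline{m}(s^{I_h})$ we have $1-B_i(m)\leq\epsilon_0$. Writing $p_{UN}(s^{I_h},a_m)-\underline{p}(s^{I_h})=\sum_{\phi^{I_h}}\sum_{i}\Pr(\theta_i,\phi^{I_h}|s^{I_h})p_{SP}^h(1-B_i(m))$ from \eqref{eq:severityLevelFinalform} using $p_{CG}=1-p_{SP}$ of Definition \ref{def:capacity gap}, and then invoking the identity $\sum_{\phi^{I_h}}\sum_{i}\Pr(\theta_i,\phi^{I_h}|s^{I_h})p_{SP}^h=1-\underline{p}(s^{I_h})$, I obtain $p_{UN}(s^{I_h},a_m)\leq\underline{p}(s^{I_h})+\epsilon_0(1-\underline{p}(s^{I_h}))$; multiplying by $\bar{c}(w_{UN},s^{I_h})>0$ bounds the first term. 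For the third term, $B_i(m)\geq 1-\epsilon_0$ together with nonpositivity of the coefficients flips the inequality to $\lambda(s^{I_h},m,\phi^{I_h})\leq(1-\epsilon_0)\lambda_{min}(s^{I_h},\phi^{I_h})=\lambda_{max}^{\epsilon_0}(s^{I_h},\phi^{I_h})$. Summing over $\phi^{I_h}$ and restoring the common term $m\bar{c}(w_{NI},s^{I_h})$ yields $c(s^{I_h},a_m)\leq c_{max}^{\epsilon_0}(s^{I_h})$.

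The ``moreover'' claim is then immediate: both $c_{min}(s^{I_h})$ and $c_{max}^{\epsilon_0}(s^{I_h})$ consist of an $m$-independent constant plus the identical summand $m\bar{c}(w_{NI},s^{I_h})$, so each is affine in $m$ with the same slope $\bar{c}(w_{NI},s^{I_h})>0$. I expect the main obstacle to be the sign bookkeeping rather than any deep analysis: since $\bar{c}(w_{FE},s^{I_h}),\bar{c}(w_{RE},s^{I_h})<0$ while $\bar{c}(w_{UN},s^{I_h}),\bar{c}(w_{NI},s^{I_h})>0$, every monotonicity statement in $B_i(m)$ must be translated into the correctly oriented inequality on $c$, and I must argue that a single threshold $\underline{m}(s^{I_h})$ works uniformly across the finite collection of $(\phi^{I_h},i)$ pairs. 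The one genuinely algebraic step is the identity $\sum_{\phi^{I_h}}\sum_{i}\Pr(\theta_i,\phi^{I_h}|s^{I_h})p_{SP}^h=1-\underline{p}(s^{I_h})$, which is what makes the $p_{UN}$ bound close exactly to the stated constant $\underline{p}(s^{I_h})+\epsilon_0(1-\underline{p}(s^{I_h}))$.
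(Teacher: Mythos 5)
Your proof is correct and follows essentially the same route as the paper: define $\underline{m}(s^{I_h})$ via the truncated Poisson sum reaching $[1-\epsilon_0,1]$, use Lemma \ref{lemma:SLwithAM} to pin $p_{UN}(s^{I_h},a_m)$ into $[\underline{p}(s^{I_h}),\,\underline{p}(s^{I_h})+\epsilon_0(1-\underline{p}(s^{I_h}))]$, and substitute into the decomposition \eqref{eq:risk and severitylevel relation}. The only difference is that you spell out the sign bookkeeping for the $\lambda$ term and the uniformity of the threshold over the finite $(\phi^{I_h},i)$ pairs, details the paper compresses into ``plugging it into \eqref{eq:risk and severitylevel relation}, we obtain the results.''
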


\begin{proof}
For any $\epsilon_0\in (0,1]$, there exists $\underline{m}(s^{I_h})\in \mathbb{Z}^+$ such that $\sum_{n=0}^{m} \frac{1}{n!} e^{-\beta d(s^{I_h},\theta^{I_h}, \phi^{I_h}) } (\beta d(s^{I_h},\theta^{I_h}, \phi^{I_h}) )^n \in [1-\epsilon_0,1]$ when $m\geq \underline{m}(s^{I_h})$. 
Based on Lemma \ref{lemma:SLwithAM}, if $m>\underline{m}(s^{I_h})$, then ${p}_{UN}(s^{I_h},a_m)\in [\underline{p}(s^{I_h}), \underline{p}(s^{I_h})+\epsilon_0(1-\underline{p}(s^{I_h}))]$. 
Plugging it into \eqref{eq:risk and severitylevel relation}, we obtain the results. 
\end{proof}

Let $\sigma^{\underline{m}}\in \Sigma$ denote the AM strategy that chooses to de-emphasize the next $m\geq \underline{m}(s^{I_h})$ alerts for all category label $s^{I_h}\in \mathcal{S}$. 
The monotonicity of the Bellman operator \cite{bertsekas1996neuro} leads to the following corollary. 
\begin{corollary}
\label{corollary:risk}
Consider the scenario where Condition \ref{condition:poissonProcess} holds and $M>\underline{m}(s^{I_h})$. 
For any $\epsilon_0\in (0,1]$ and $s^{I_h}\in \mathcal{S}$, the upper and lower bounds of the risk $u(s^{I_h},\sigma^{\underline{m}})$ increase in $m$ linearly at the same rate of $\bar{c}(w_{NI},s^{I_h})$. 
\end{corollary}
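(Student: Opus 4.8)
The plan is to lift the stage-cost bounds of Proposition~\ref{prop:ARiskwithAM} to the cumulative risk through the Bellman recursion and the monotonicity argument already invoked in the proof of Proposition~\ref{prop:exp_product}. For the stationary strategy $\sigma^{\underline{m}}$, the risk $u(s^{I_h},\sigma^{\underline{m}})$ is the unique fixed point of the policy-evaluation operator
\begin{equation*}
(T_{\sigma^{\underline{m}}}v)(s)=c(s,a_m)+\gamma\sum_{s'\in\mathcal{S}}P_m(s'|s)\,v(s'),
\end{equation*}
where $P_m$ is the category-label kernel induced by $\sigma^{\underline{m}}$; by Proposition~\ref{prop:Markov} and the ambitious-operator identity $I_{h+1}=I_h+m+1$, this kernel is the $(m{+}1)$-step transition of $\kappa_{CL}$. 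First I would introduce the auxiliary operators $\underline{T}$ and $\overline{T}$ obtained by substituting the stage cost $c(s,a_m)$ with the lower bound $c_{min}(s)$ and the upper bound $c^{\epsilon_0}_{max}(s)$ of Proposition~\ref{prop:ARiskwithAM}, while retaining the same kernel $P_m$. Since $c_{min}(s)\le c(s,a_m)\le c^{\epsilon_0}_{max}(s)$ for every $s\in\mathcal{S}$ once $m\ge\underline{m}(s)$, these operators are ordered pointwise, and the monotonicity and $\gamma$-contraction of the Bellman operator~\cite{bertsekas1996neuro} carry the ordering to the fixed points, giving $\underline{u}(s)\le u(s,\sigma^{\underline{m}})\le\overline{u}(s)$ for all $s$. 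This identifies $\underline{u}$ and $\overline{u}$ as the lower and upper bounds on the risk.

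It then remains to read off their dependence on $m$. I would decompose each stage-cost bound into an $m$-independent intercept and the shared linear term, writing $c_{min}(s)=\hat{A}_{min}(s)+m\,\bar{c}(w_{NI},s)$ and $c^{\epsilon_0}_{max}(s)=\hat{A}_{max}(s)+m\,\bar{c}(w_{NI},s)$, where $\hat{A}_{min}$ and $\hat{A}_{max}$ collect the bounded reward terms $\lambda_{min},\lambda^{\epsilon_0}_{max}$ and the ADL terms from the definitions preceding Proposition~\ref{prop:ARiskwithAM} and carry no growth in $m$. The key structural observation is that both bounds share the \emph{identical} linear term $m\,\bar{c}(w_{NI},s)$, so their difference $c^{\epsilon_0}_{max}(s)-c_{min}(s)=\hat{A}_{max}(s)-\hat{A}_{min}(s)$ is independent of $m$. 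Solving the linear fixed-point systems, $\overline{u}-\underline{u}=(I-\gamma P_m)^{-1}(\hat{A}_{max}-\hat{A}_{min})$ stays bounded as $m$ grows, whence the two bounds increase in $m$ at the \emph{same} rate and differ only by an $m$-bounded gap. Reading the common linear growth off the first summand of the ECuC, whose $h=0$ term contributes exactly $m\,\bar{c}(w_{NI},s^{I_h})$ at the starting label $s^{I_h}$, identifies that common rate as $\bar{c}(w_{NI},s^{I_h})$.

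The main obstacle is that the induced kernel $P_m$ itself varies with $m$, so the slope vector $(I-\gamma P_m)^{-1}\bar{c}_{NI}$, with $\bar{c}_{NI}$ the vector of entries $\bar{c}(w_{NI},s)$, is filtered through the resolvent and is not literally equal to $\bar{c}(w_{NI},s^{I_h})$; the discounted future inspections add a correction. I would control this by separating the current-stage contribution, which is exactly $m\,\bar{c}(w_{NI},s^{I_h})$, from the tail $\sum_{h\ge1}\gamma^h\mathbb{E}_m[\bar{c}(w_{NI},\mathbf{S}^{\mathbf{I}_h})]$, and bounding the latter uniformly in $m$ using that $\mathcal{S}$ is finite, that $\bar{c}(w_{NI},\cdot)$ and $\hat{A}_{min},\hat{A}_{max}$ are bounded on $\mathcal{S}$, and that $P_m$ remains stochastic so that the geometric factor $\sum_{h\ge1}\gamma^h=\gamma/(1-\gamma)$ controls the discounted tail. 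This confines the kernel-induced variation to a bounded perturbation of the intercept, leaving $\bar{c}(w_{NI},s^{I_h})$ as the leading common slope of both bounds, which is precisely the claim of the corollary.
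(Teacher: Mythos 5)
Your first paragraph is essentially the paper's entire argument: the paper proves this corollary in one sentence by applying Proposition~\ref{prop:ARiskwithAM} together with the monotonicity of the Bellman operator, exactly as you do with the ordered policy-evaluation operators $\underline{T}$ and $\overline{T}$. So on the main route you and the paper coincide, and you are more explicit about why the sandwiching of stage costs transfers to the fixed points.

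Where you go beyond the paper --- trying to justify that the common growth rate is \emph{exactly} $\bar{c}(w_{NI},s^{I_h})$ --- your resolution of the obstacle you correctly identify does not quite work. Writing $c(s,a_m)=\hat{A}(s)+m\,\bar{c}(w_{NI},s)$ and unrolling the ECuC gives
\begin{equation*}
u(s^{I_0},\sigma^{\underline{m}})=\sum_{h\ge 0}\gamma^h\,\mathbb{E}_m\bigl[\hat{A}(\mathbf{S}^{\mathbf{I}_h})\bigr]+m\sum_{h\ge 0}\gamma^h\,\mathbb{E}_m\bigl[\bar{c}(w_{NI},\mathbf{S}^{\mathbf{I}_h})\bigr],
\end{equation*}
so the factor of $m$ multiplies the \emph{entire} discounted tail, not only the $h=0$ term. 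Bounding $\sum_{h\ge 1}\gamma^h\mathbb{E}_m[\bar{c}(w_{NI},\mathbf{S}^{\mathbf{I}_h})]$ uniformly in $m$ therefore confines the kernel-induced variation to a bounded perturbation of the \emph{slope}, not of the intercept as you claim: the coefficient of $m$ is $\bar{c}(w_{NI},s^{I_0})+\gamma(\cdot)$, which generally differs from $\bar{c}(w_{NI},s^{I_0})$ and itself depends on $m$ through $P_m$ unless $\bar{c}(w_{NI},\cdot)$ is constant over $\mathcal{S}$ or one reads the corollary as a per-stage rate. The paper glosses over this point entirely, so your sandwich argument is as complete a proof of the corollary as the paper's own; just do not present the last step as establishing the exact rate, since the tail term obstructs it.
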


\begin{remark}[\textbf{Fundamental Limit of ECoC and Risk}]
Proposition \ref{prop:ARiskwithAM} and Corollary \ref{corollary:risk} show that the maximum length of the de-emphasized alerts for any $s^{I_h}\in \mathcal{S}$ should not exceed $\underline{m}(s^{hm})$ to reduce the ECoC and the risk of IDoS attacks. 
\end{remark}

\section{Case Study}
\label{sec:case study}
The following section presents case studies to demonstrate the impact of IDoS attacks on human operators' alert inspections and alert responses, and further illustrate the effectiveness of RADAMS. 
Throughout the section, we adopt the attention model in Section \ref{sec:human model}. 

\subsection{Experiment Setup}
\label{subsec:Experiment Setup}

We consider an IDoS attack targeting either the Programmable Logic Controllers (PLCs) in the physical layer or the data centers in the cyber layer of an ICS. 
We \textcolor{black}{denote} these two targets as $\phi_{P}$ and $\phi_{C}$, respectively. 
They constitute the binary set of attack targets $\Phi=\{\phi_{P},\phi_{C}\}$ defined in Section \ref{subsec:feint and real}. 
The SOC of the ICS is in charge of monitoring, inspecting, and responding to both the cyber and the physical alerts. 
We consider two system-level metrics defined in Section \ref{subsec:alert triage}, the source  $\mathcal{S}_{SO}=\{s_{SO,P},s_{SO,C}\}$ and the criticality  $\mathcal{S}_{CR}=\{s_{CR,L},s_{CR,H}\}$, i.e.,  $\mathcal{S}=\mathcal{S}_{SO}\times \mathcal{S}_{CR}$. 
Let $s_{SO,P}$ and $s_{SO,C}$ represent the source of physical and cyber layers, respectively. We assume that the alert triage process can accurately identify the source of attacks, i.e., 
$\Pr(s_{SO,i}|\phi_j)=\mathbf{1}_{\{ i=j  \}}, \forall i,j\in \{P,C\}$. 
Let $s_{CR,L}$ and $s_{CR,H}$ represent low and high  criticality, respectively. 
We assume that the triage process cannot accurately identify feints as low criticality and real attacks as high criticality. 
The revelation kernel is separable and takes the form of  $o(s_{SO},s_{CR}|\theta_i,\phi_j)=\Pr(s_{SO}|\phi_j)\cdot \Pr(s_{CR}|\theta_i),s_{SO}\in \mathcal{S}_{SO},s_{CR}\in \mathcal{S}_{CR}, i\in \{FE,RE\}, j\in\{P,C\}$. 
We choose the values of $o$ so that the attack is more likely to be feint (resp. real) when the criticality level is low (resp. high).

The inter-arrival time at attack stage $k\in \mathbb{Z}^{0+}$ follows an exponential distribution with rate $\beta(\theta^k,\theta^{k+1})$ parameterized by the attack's type $\theta^k,\theta^{k+1}$. 
Thus, the average inter-arrival time $\mu(\theta^k,\theta^{k+1}):=1/\beta(\theta^k,\theta^{k+1})$ also depends on the attack's type at the current and the next attack stages as shown in Table \ref{tab:casevalue average inter-arrival time}. 
We choose the benchmark values based on the literature (e.g., \cite{TIFS8573836,TIFS8470145} and the references within) and attacks can change these values in different IDoS attacks. 

\begin{table}[h]
\centering
\caption{
Benchmark values of the average inter-arrival time $\mu(\theta^k,\theta^{k+1})=1/\beta(\theta^k,\theta^{k+1}), \forall \theta^k,\theta^{k+1}\in \Theta$. 
}
\label{tab:casevalue average inter-arrival time}
\begin{tabular}{|c|c|}
\hline
Average inter-arrival  time from feints to real attacks  & $6$s  \\ \hline
Average inter-arrival time from real attacks to feints  & $10$s \\ \hline
Average inter-arrival time between feints & $15$s \\ \hline
Average inter-arrival time between real attacks & $8$s \\ \hline
\end{tabular}
\end{table}

The average inspection time $\bar{d}$ in Section \ref{subsec:Attentional Factors} depends on the criticality $s^k_{CR}$ and attack's type $\theta^k$ at attack stage $k\in\mathbb{Z}^{0+}$, as shown in Table \ref{tab:casevalue Average INspection time}.  We choose the benchmark values of $\bar{d}(s^k_{CR},\theta^k)$ based on \cite{TIFS8573836}, and these values can change for different human operators and IDoS attacks. 
We add a random noise uniformly distributed in $[-5,5]$ to the average inspection time to simulate the AITN. 

\begin{table}[h]
\centering
\caption{
Benchmark values of the average inspection time  $\bar{d}(s^k_{CR},\theta^k), \forall\theta^k\in \Theta, s^k_{CR}\in \mathcal{S}_{CR}$.  
}
\label{tab:casevalue Average INspection time}
\begin{tabular}{|c|c|}
\hline
Average time to inspect feints of low criticality  & $6$s  \\ \hline
Average time to inspect feints of high criticality  & $8$s \\ \hline
Average time to inspect real attacks of low criticality & $15$s \\ \hline
Average time to inspect real attacks of high criticality & $20$s \\ \hline
\end{tabular}
\end{table}

The stage cost $\bar{c}(w^{k},s^{k}_{SO})$ at attack stage $k\in \mathbb{Z}^{0+}$ in Section \ref{subsec:Stage Cost and Expected Cumulative Cost} depends on the alert response $w^k\in\mathcal{W}$ and the source $s_{SO}^k\in \mathcal{S}_{SO}$.  
We determine the benchmark values of  $\bar{c}(w^{k},s^{k}_{SO})$ per alert in Table \ref{tab:casevalue Costs} based on the salary of the SOC analysts and the estimated loss of the associated attacks.   

\begin{table}[h]
\centering
\caption{
The benchmark values of the stage cost $\bar{c}(w^{k},s^{k}_{SO}), \forall w^k\in\mathcal{W}, s_{SO}^k\in \mathcal{S}_{SO}$. 
}
\label{tab:casevalue Costs}
\begin{tabular}{|c|c|}
\hline
Reward of dismissing feints $w_{FE}$      & \$$80$   \\ \hline
Reward of identifying real attacks $w_{RE}$ in physical layer        & \$$500$     \\ \hline
Reward of identifying real attacks $w_{RE}$ in cyber layer        & \$$100$     \\ \hline
Cost of incomplete alert response $w_{UN}$ or $w_{NI}$         & \$$300$    \\ \hline
\end{tabular}
\end{table}

\subsection{Analysis of Numerical Results}

We plot the dynamics of the operator's alert responses in Fig. \ref{fig:dynamicDecision} under the benchmark experiment setup in Section \ref{subsec:Experiment Setup}.  
We use green, purple, orange, and yellow to represent $w_{UN}$, $w_{NI}$, $w_{FE}$, and $w_{RE}$, respectively. 
The heights of squares are also used to distinguish the four categories. 

\begin{figure}[h]
\centering
 \includegraphics[width=1 \linewidth]{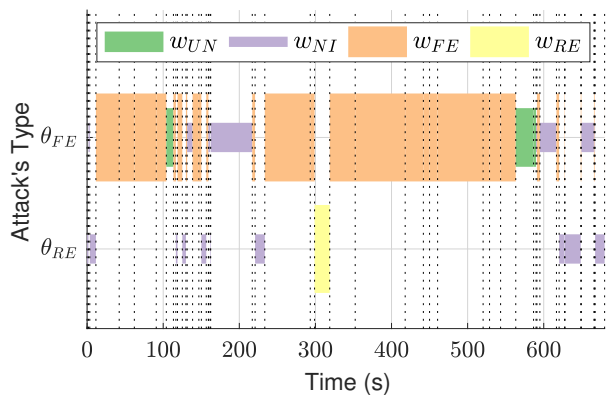}
  \caption{
Alert response $w^k\in \mathcal{W}$ for the $k$-th attack whose type is shown in the $y$-axis. 
The $k$-th vertical dash line represents the $k$-th alert's arrival time $t^k$. 
  }
\label{fig:dynamicDecision}
\end{figure}

\subsubsection{Adaptive Learning during the Real-Time Monitoring and Inspection}
Based on Algorithm \ref{algorithm:human simulation}, we illustrate the learning process of the estimated ECuC $Q^h(s^{I_h},a^h)$ for all $s^{I_h}\in \mathcal{S}$ and $a^h\in \mathcal{A}$ at each inspection stage $h\in \mathbb{Z}^{0+}$ in Fig. \ref{fig:Qlearningtotal}. 
We choose $\alpha^h(s^{I_h},a^h)=\frac{k_c}{k_{TI}(s^{I_h})-1+k_c}$ as the learning rate, where $k_c\in (0,\infty)$ is a constant parameter and $k_{TI}(s^{I_h})\in \mathbb{Z}^{0+}$ is the number of visits to $s^{I_h}\in\mathcal{S}$ up to stage $h\in  \mathbb{Z}^{0+}$. 
Here, the AM action $a^h$ is implemented randomly at each inspection stage $h$, i.e., $\epsilon=1$. Thus, all four AM actions ($M=3$) are explored equally on average for each $s^{I_h}\in \mathcal{S}$ as shown in Fig. \ref{fig:Qlearningtotal}. 
Since the number of visits to different category labels depends on the transition probability $\kappa_{AT}$, the learning stages for four category labels are of different lengths. 

We denote category labels $(s_{SO,P},s_{CR,L})$, $(s_{SO,P},s_{CR,H})$, $(s_{SO,C},s_{CR,L})$, and $(s_{SO,C},s_{CR,H})$ in blue, red, green, and black, respectively. 
To distinguish four AM actions, a deeper color represents a larger $m\in \{0,1,2,3\}$ for each category label $s_{SO,i},s_{CR,j}, i\in \{P,C\}, j\in \{H,L\}$. The inset black box magnifies the selected area. 
The optimal strategy $\sigma^*\in \Sigma$ is to take $a_3$ for all category labels. 
The risk $v^*(s^{I_h})=u(s^{I_h},\sigma^*)$ under the optimal strategy has the approximated values of $\$ 1153$, $\$ 1221$, $\$ 1154$, and $\$ 1358$ for the above category labels  in blue, red, green, and black, respectively. 
Based on Algorithm \ref{algorithm:human simulation}, we also simulate the operator's real-time monitoring and inspection under IDoS attacks when AM strategy is not applied. 
The risks $v^0(s^{I_h}):=u(s^{I_h}, \sigma^0)$ under the default AM strategy $\sigma^0\in\Sigma$ have the approximated values of  $\$1377$,  $\$1527$, $\$1378$,  and $\$ 1620$ for the category label $(s_{SO,P},s_{CR,L})$, $(s_{SO,P},s_{CR,H})$, $(s_{SO,C},s_{CR,L})$, and $(s_{SO,C},s_{CR,H})$, respectively. 
These results illustrate that the optimal AM strategy $\sigma^*\in \Sigma$ can significantly reduce the risk under IDoS attacks for all category labels and the reduction percentage can be as high as $20\%$.

\begin{figure}[h]
\centering
 \includegraphics[width=1\linewidth]{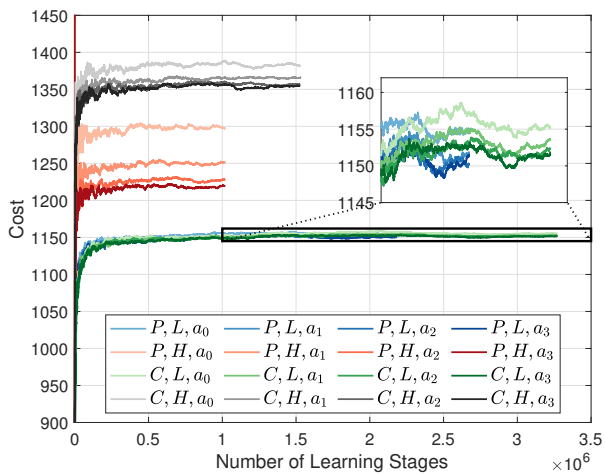}
  \caption{
The convergence of the estimated ECuC $Q^h(s^{I_h},a^h)$ vs. the number of inspection stages. 
  }
\label{fig:Qlearningtotal}
\end{figure}


We further investigate the IDoS risk under the optimal AM strategy $\sigma^*$ as follows. 
As illustrated in Fig. \ref{fig:Qlearningtotal}, when the criticality level is high (i.e., the attack is more likely to be real), the attacks targeting cyber layers (denoted in black) result in a higher risk than the one targeting physical layers (denoted in red).  
This asymmetry results from the different rewards of identifying real attacks in physical or cyber layers denoted in Table \ref{tab:casevalue Costs}. 
Since dismissing feints brings the same reward in physical and cyber layers, the attacks targeting physical or cyber layers result in similar IDoS risks when the criticality level is low. 
Within physical or cyber layers, high-criticality alerts result in a higher risk than low-criticality alerts do. 


The value of $Q^{h}(s^{I_h},a_m), m\in \{0,1,2\}$, represents the risk when RADAMS deviates to sub-optimal AM action $a_m$ for a single category label $s^{I_h}\in\mathcal{S}$. 
As illustrated by the red and black lines in Fig. \ref{fig:Qlearningtotal}, this single deviation can increase the risk under alerts of high criticality. 
However, it hardly increases the risk under alerts of low criticality as illustrated by the green and blue lines in the inset black box of Fig. \ref{fig:Qlearningtotal}. 
These results illustrate that we can deviate from the optimal AM strategy to sub-optimal ones for some category labels with approximately equivalent risk, which we refer to as the \textit{attentional risk equivalency} in Remark \ref{remark:deviation Equivalency}. 

\begin{remark}[\textbf{Attentional Risk Equivalency}] 
\label{remark:deviation Equivalency}
The above results illustrate that we can contain the IDoS risk by selecting proper sub-optimal strategies. 
If applying the optimal AM strategy  $\sigma^*$ is costly, then RADAMS can choose not to apply AM strategy for $(s_{SO,C},s_{CR,L})$ or $(s_{SO,P},s_{CR,L})$ without significantly increasing the IDoS risks. 
\end{remark} 

\subsubsection{Optimal AM Strategy and Resilience Margin under Different Stage Costs}

We define \textit{resilience margin} as the difference of the risks under the optimal and the default AM strategies. 
We investigate how the cost of incomplete alert response in Table \ref{tab:casevalue Costs} affects the optimal AM strategy and the resilience margin in Fig. \ref{fig:costchange}. 


\begin{figure}[h]
\centering
 \includegraphics[width=1\linewidth]{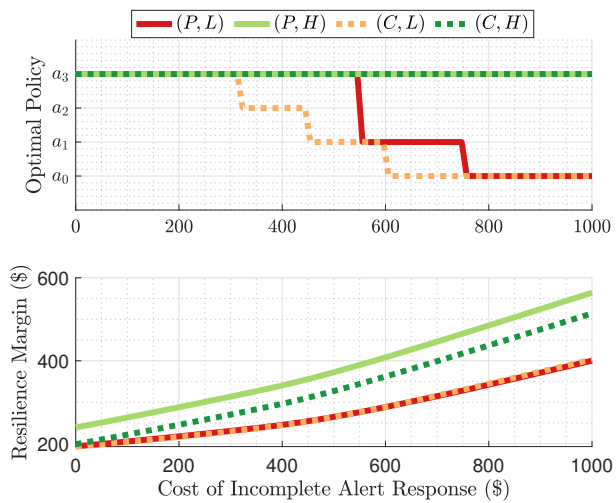}
  \caption{
The optimal AM strategy and the risk vs. the cost of an incomplete alert response under category label ($s_{SO,P},s_{CR,L}$), ($s_{SO,P},s_{CR,H}$), ($s_{SO,C},s_{CR,L}$), and ($s_{SO,C},s_{CR,H}$) in solid red, solid green, dashed yellow, and dashed green, respectively. 
}
\label{fig:costchange}
\end{figure}

As shown in the upper figure, the optimal strategy remains to choose AM action $a_3$ when the alert is of high criticality. 
When the alert is of low criticality, then as the cost increases, the optimal AM strategy changes sequentially from $a_3$, $a_2$, and $a_1$ to $a_0$; i.e., RADAMS gradually decreases $m\in \{0,1,2,3\}$, the number of de-emphasized alerts. 
As shown in the lower figure, the resilience margin increases monotonously with the cost. 
The optimal strategy for alerts of high criticality yields a larger resilience margin than the one for low criticality. 
\begin{remark}[\textbf{Tradeoff of Monitoring and Inspection}]
The results show that the optimal strategy strikes a balance between real-time monitoring a large number of alerts and inspecting selected alerts with high quality. Moreover, the optimal strategy is resilient for a large range of cost values ($[\$ 0,\$  1000]$). 
If the cost is high, and the alert is of low (resp. high) criticality, then the optimal strategy encourages monitoring (resp. inspecting) by choosing a small (resp. large) $m$. 
However, when the cost of an incomplete alert response is relatively low, the optimal strategy is $a_4$ for all alerts because the high-quality inspection outweighs the high-quantity monitoring. 
\end{remark}

\subsubsection{Arrival Frequency of IDoS Attacks}
\label{subsec:Arrival Frequency of IDoS Attacks}

As stated in Section \ref{subsec:feint and real}, feint attacks with the goal of triggering alerts require fewer resources to craft. 
Thus, we let $\hat{c}_{RE}=\$ 0.04$ and $\hat{c}_{FE}\in (0,\hat{c}_{RE})$ denote the cost to generate a real attack and a feint, respectively.  
With $\hat{c}_{RE}$ and $\hat{c}_{FE}$, we can compute the attack cost of feint and real attacks per work shift of $24$ hours. 
Let $\rho$ be the scaling factor for the arrival frequency, and in Section \ref{subsec:Arrival Frequency of IDoS Attacks}, the average inter-arrival time is $\hat{\mu}(\theta^k, \theta^{k+1})= \rho {\mu}(\theta^k, \theta^{k+1}), \forall \theta^k, \theta^{k+1}\in \Theta$. 
We investigate how the scale factor $\rho\in (0,2.5]$ affects the IDoS risk and the attack cost in Fig. \ref{fig:freq_risk}. 
As $\rho$ decreases, the attacker generates feint and real attacks at a higher frequency. 
Then, the risks under both the optimal and the default strategies increase. However, the optimal AM strategy can reduce the increase rate for a large range of $\rho\in [0.5,2]$. 

\begin{figure}[h]
\centering
 \includegraphics[width=1 \linewidth]{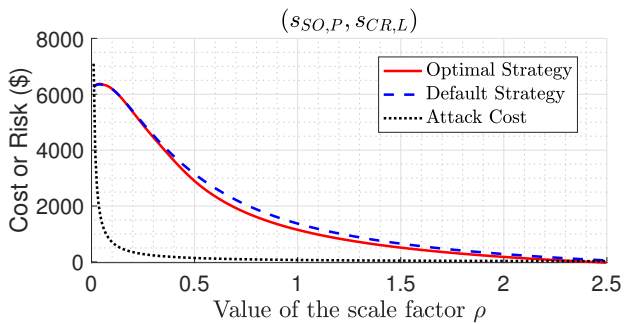}
  \caption{
IDoS risk vs. $\rho$ under the optimal and the default AM strategies in solid red and dashed blue, respectively. 
The black line represents the attack cost per work shift of $24$ hours. 
  }
\label{fig:freq_risk}
\end{figure}  

\begin{remark}[\textbf{Attacker's Dilemma}]
From the attacker's perspective, although increasing the attack frequency can induce a high risk to the organization, and the attacker can gain from it, the frequency increase also increases the attack cost exponentially, as shown by the dotted black line in Fig. \ref{fig:freq_risk}. 
Thus, the attacker has to strike a balance between the attack cost and the attack gain (represented by the IDoS risk). 
Moreover, attackers with a limited budget are not capable to choose small values of $\rho$ (i.e., high attack frequencies). 
\end{remark}



\subsubsection{Percentage of Feint and Real Attacks}

Consider the case where $\kappa_{AT}$ independently generates feints and real attacks with probability $\eta_{FE}$ and $\eta_{RE}=1-\eta_{FE}$, respectively. 
We consider the case where the attacker has a limited budge $\hat{c}_{max}=\$ 270$ per work shift (i.e., $86400 s$) and generates feint and real attacks at the same rate $\hat{\beta}$, i.e., $\beta(\theta^k, \theta^{k+1})=\hat{\beta}, \forall \theta^k,\theta^{k+1}\in \Theta$. 
Consider the attack cost in Section \ref{subsec:Arrival Frequency of IDoS Attacks}, the attacker has the following budget constraint, i.e., 
\begin{equation}
\label{eq:attacker budget constraint}
    86400 \cdot \hat{\beta} \cdot  (\eta_{FE} \hat{c}_{FE}+\eta_{RE} \hat{c}_{RE}) \leq \hat{c}_{max}. 
\end{equation}
The budget constraint results in the following tradeoff.  
If the attacker chooses to increase the probability of real attack $\eta_{RE}$, then he has to reduce the arrival frequency $\hat{\beta}$ of feint and real attacks. 
We investigate how the probability of feints affects the IDoS risk in Fig. \ref{fig:prob of feints} under the optimal and the default AM strategies in red and blue, respectively. 
The feints are of low and high costs in Fig. \ref{fig:feintuFEeqRE_cef1} and \ref{fig:feintuFEeqRE_cef5}, respectively. 

\begin{figure}[h]
    \centering 
    \begin{subfigure}{0.24\textwidth}
 \includegraphics[width=1\linewidth]{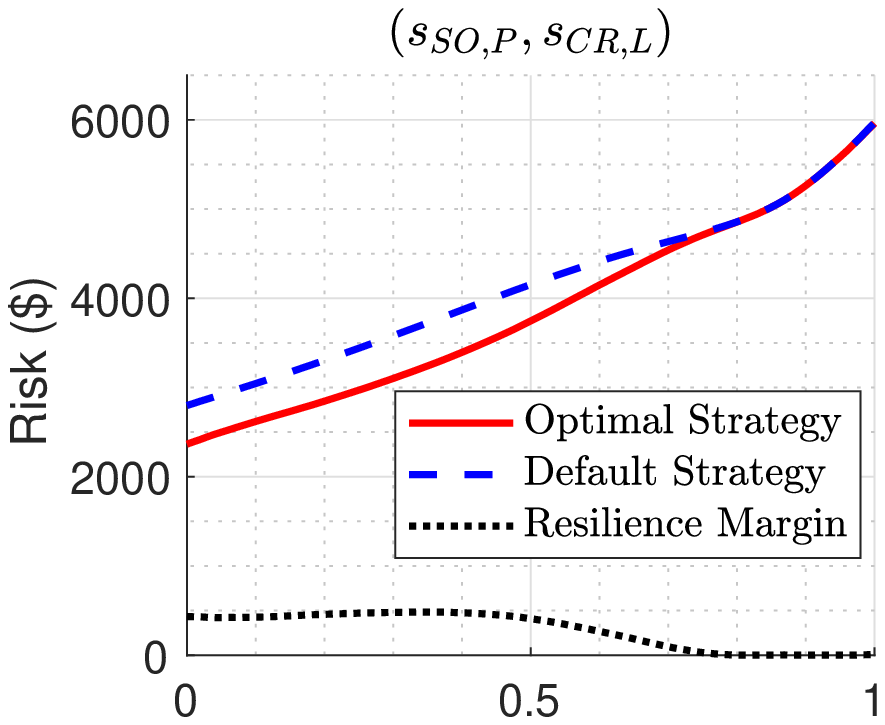} 
  \caption{
Low-cost feints $\hat{c}_{FE}=1$. 
  }
\label{fig:feintuFEeqRE_cef1}
\end{subfigure}\hfil 
    \begin{subfigure}{0.24\textwidth}
 \includegraphics[width=1\linewidth]{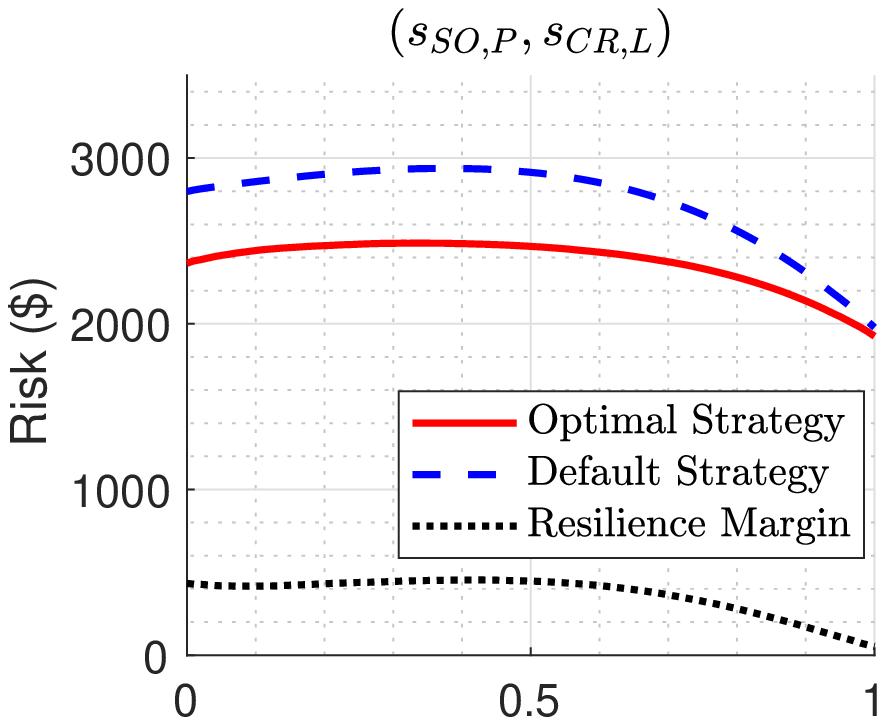} 
  \caption{
High-cost feints $\hat{c}_{FE}=5$. 
  }
\label{fig:feintuFEeqRE_cef5}
\end{subfigure}\hfil 
\caption{
IDoS risk vs. $\eta_{FE}\in [0,1]$ under the optimal and the default AM strategies in red and blue, respectively. 
The black line represents the resilience margin. 
\label{fig:prob of feints}
}
\end{figure}

As shown in Fig. \ref{fig:feintuFEeqRE_cef1}, when the feints are of low cost, i.e., $\hat{c}_{FE}=\hat{c}_{RE}/10$, generating feints with a higher probability monotonously increases the IDoS risks for both AM strategies. 
When the probability of feints is higher than $80\%$, the resilience margin is zero; i.e., the optimal and the default AM strategies both induce high risks. 
However, as the probability of feint decreases, the resilience margin increases to around $\$500$; i.e., the default strategy can moderately reduce the risk, but the optimal strategy can excessively reduce the risk. 

\begin{remark}[\textbf{Half-Truth Attack for High-Cost Feints}]
As shown in Fig. \ref{fig:feintuFEeqRE_cef5}, when the feints are of high cost, i.e., $\hat{c}_{FE}=\hat{c}_{RE}/2$, then the optimal attack strategy is to deceive with \textit{half-truth}, i.e., generating feint and real attacks with approximately equal probability to induce the maximum IDoS risk. 
As the probability of feints decreases from $\eta_{FE}=1$, the risk increases significantly under the default AM strategy but moderately under the optimal one. 
\end{remark}

The figures in Fig. \ref{fig:prob of feints} show that the optimal attack strategy under the budget constraint \eqref{eq:attacker budget constraint} needs to adapt to the cost of feint generation. 
Regardless of the attack strategy, the optimal AM strategy can reduce the risk and achieve a positive resilient margin for all category labels $(s_{SO,i},s_{CR,j}), i\in \{P,C\},j\in \{L,H\}$. 
Moreover, higher feint generation cost reduces the arrival frequency of IDoS attacks due to \eqref{eq:attacker budget constraint}. 
Thus, comparing to Fig. \ref{fig:feintuFEeqRE_cef1}, the risk in Fig. \ref{fig:feintuFEeqRE_cef5} is lower for the same $\eta_{FE}$ under the optimal or the default AM strategies, especially when $\eta_{FE}$ is close to $1$.

\subsubsection{The Operator's Attention Capacity}

We consider the following attention function $f_{LOE}\circ f_{SL}$ with a constant attention threshold, i.e.,  $\bar{n}(y_{EL},s^k)=\bar{n}_0, \forall y_{EL},s^k\in\mathcal{S}$.
Consider the following trapezoid attention function. 
If $n^t\leq \bar{n}_0$, the LOE $\omega^t=1$; i.e., the operator can retain the high LOE when the number of distractions is less than the attention threshold $\bar{n}_0$. If $n^t > \bar{n}_0$, the LOE $\omega^t$ gradually decreases as $n^t$ increases. 
Then, a larger value of $\bar{n}_0$ indicates a high attention capacity. We investigate how the value of $\bar{n}_0$ affects the risk in Fig. \ref{fig:nu}. 

\begin{figure}[h]
\centering
 \includegraphics[width=1\linewidth]{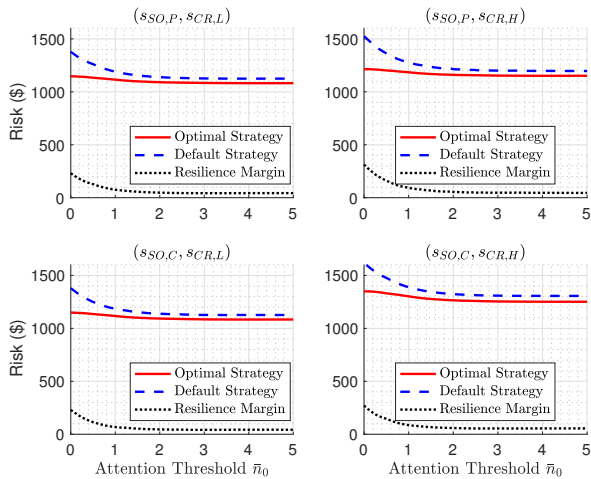}
  \caption{
  Risk vs. attention threshold under the optimal and the default AM strategies in red and blue, respectively. The black dotted line represents the resilience margin. 
  }
\label{fig:nu}
\end{figure}
 As the operator's attention capacity increases, the risks under the optimal and the default AM strategies decrease for all category labels. 
 The resilience margin decreases from around $\$ 200$ to $\$ 50$ as $\bar{n}_0$ increases from $0$ to $2$ and then maintains the value of around $\$ 50$. 
 Thus, the optimal strategy suits operators with a large range of attention capacity, especially for the ones with limited attention capacity.

\section{Conclusion}
\label{sec:conclusion}

Attentional human vulnerabilities exploited by attackers lead to a new class of proactive attacks called the Informational Denial-of-Service (IDoS) attacks. 
IDoS attacks generate a large number of feint attacks on purpose to deplete the limited human attention resources and exacerbate the alert fatigue problem. 
In this work, we have formally defined IDoS attacks as a sequence of feint and real attacks of heterogeneous targets, which can be characterized by the Markov renewal process. 
We have abstracted the alert generation and \textcolor{black}{technical-level} triage processes as a revelation probability to establish a stochastic relationship between the IDoS attack's hidden types and targets and the associated alert's observable category labels. 
We have explicitly incorporated human factors (e.g., levels of expertise, stress, and efficiency) and empirical results (e.g., the Yerkes–Dodson law and the sunk cost fallacy) to model the operators' attention dynamics and the processes of alert monitoring, inspection, and response in real time. 
Based on the system-scientific human attention and alert response model, we have developed a Resilient and Adaptive Data-driven alert and Attention Management Strategy (RADAMS) to assist human operators in combating IDoS attacks. 
We have proposed a Reinforcement Learning (RL)-based algorithm to obtain the optimal assistive strategy according to the costs of the operator's alert responses in real time. 

Through theoretical analysis, we have observed the \textit{Product  Principle of  Attention}  (PPoA), the fundamental limits of Attentional Deficiency Level (ADL) and risk, and tradeoff among the ADL, the reward of alert attention, and the impact of alert inattention. 
Through the experimental results, we have corroborated the \textit{effectiveness}, \textit{adaptiveness}, \textit{robustness}, and \textit{resilience} of the proposed assistive strategies as follows. 
First, the optimal AM strategy outperforms the default strategy and can effectively reduce the IDoS risk by as much as $20\%$. 
Second, the strategy adapts to different category labels to strike a balance of monitoring and inspections. 
Third, the optimal AM strategy is robust to deviations. 
We can apply sub-optimal strategies at some category labels without significantly increasing the IDoS risk. 
Finally, the optimal AM strategy is resilient to a large variations of costs, attack frequencies, and human attention capacities. 

\textcolor{black}{The current work uses Industrial Control Systems (ICS) as a quintessential example to illustrate the IDoS attacks and the associated human-aware alert and attention management strategies. RADAMS can also be applied to broad types of scenarios (e.g., healthcare, public transport control, and weather warning) that require human operators of limited attention resources to monitor and manage massive alerts in real time with a high level of situational awareness. 
RADAMS adopts the ``less is more'' principle by restricting the amount of information processed by the human operators to be within their attention capacities. 
Such principle is transferable to other assailable cognitive resources of human operators, including memory, reasoning, and learning capacity. 
The future work would incorporate more generalized models (e.g., the spatio-temporal self-excited process) to capture the history-dependent temporal arrival of IDoS attacks, the spatial location of the alerts, their impacts on human attention, and the associated human-assistive security technologies.}


\bibliographystyle{IEEEtran}
\bibliography{IDOSjournal}

\end{document}